\setlist{nolistsep}
\newif\ifload
\setlist{nolistsep}
\newcommand{\para}[1]{\medskip\noindent\textbf{#1}}
\newcommand{\paraf}[1]{\noindent\textbf{#1}}
\newcommand{\cut}[1]{}
\newcommand{\sysname}{Harmonia\xspace}
\newcommand{\seq}{\textit{seq}\xspace}
\newcommand{\msg}[1]{\textsc{#1}}
\newtheorem{theorem}{Theorem}
\begin{document}
\sloppy
\date{}

\title{\sysname: Near-Linear Scalability for Replicated Storage \\
with In-Network Conflict Detection}

\author{
Hang Zhu$^1$, Zhihao Bai$^1$, Jialin Li$^2$, Ellis Michael$^2$,
Dan Ports$^3$, Ion Stoica$^4$, Xin Jin$^1$ \\
\\
\affaddr{$^1$Johns Hopkins University, $^2$University of Washington, $^3$Microsoft Research, $^4$UC Berkeley}
}

\maketitle

\begin{abstract}
Distributed storage employs replication to mask failures and improve
availability. However, these systems typically exhibit a hard tradeoff between
consistency and performance. Ensuring consistency introduces coordination
overhead, and as a result the system throughput does not scale with the number of
replicas. We present \sysname, a replicated storage architecture that exploits
the capability of new-generation programmable switches to obviate this
tradeoff by providing near-linear scalability without sacrificing consistency.
To achieve this goal, \sysname detects read-write conflicts in the network,
which enables any replica to serve reads for objects with no pending writes.
\sysname implements this functionality at line rate, thus imposing no
performance overhead. We have implemented a prototype of \sysname on a cluster
of commodity servers connected by a Barefoot Tofino switch, and have integrated
it with Redis. We demonstrate the generality of our approach by
supporting a variety of replication protocols, including primary-backup,
chain replication, Viewstamped Replication, and NOPaxos. Experimental results
show that \sysname improves the throughput of these protocols by up to 10$\times$ for a
replication factor of 10, providing near-linear scalability up to the
limit of our testbed.
\end{abstract}

\newcommand{\secname}{Introduction}
\section{\secname}
\label{sec:introduction}

Replication is one of the fundamental tools in the modern distributed
storage developer's arsenal. Failures are a regular appearance in
large-scale distributed systems, and strongly consistent replication can
transparently mask these faults to achieve high system availability. However,
it comes with a high performance cost.

One might hope that adding more servers to a replicated system would
increase not just its reliability but also its system performance---ideally,
providing \emph{linear scalability} with the number of replicas. The
reality is quite the opposite: performance decreases with more
replicas, as an expensive replication protocol needs to be run to ensure that all
replicas are consistent. Despite much effort to reduce the cost of
these protocols, the best case is a system that approaches the
performance of a single node~\cite{chain-replication,nopaxos}.

Can we build a strongly consistent replicated system that approaches
linear scalability? Write operations inherently need to be applied to
all replicas, so more replicas cannot increase the write throughput. However,
many real-world workloads are highly skewed towards reads~\cite{gfs,
  memcache-nsdi13}---with read:write ratios as high as
30:1~\cite{workload-fb-sigmetrics12}. A scalable but naive approach is to
allow \emph{any individual replica} to serve a read, permitting
the system to achieve near-linear scalability for such workloads. Yet
this runs afoul of a fundamental limitation. Individual replicas may
lag behind, or run ahead of, the consensus state of the group. Thus,
serving reads from any storage replica has the potential to return
stale or even uncommitted data, compromising the consistency guarantee
of the replicated system. Protocol-level solutions like CRAQ~\cite{craq}
require extra server coordinations, and thus
inevitably impose additional performance overheads.

We show that it is possible to circumvent this limitation and
simultaneously achieve near-linear scalability and consistency for
replicated storage.
We do so with \sysname, a new replicated storage architecture that exploits the
capability of new-generation programmable switches. Our key observation is that while
individual replicas may constantly diverge from the consensus state, the set of
\emph{inconsistent data} at any given time is small. A storage system may store
millions or billions of objects or files. Of these, only the ones
that have writes in progress---i.e., the set of objects actively being
updated---may be inconsistent. For the others, any replica can safely serve a
read. Two features of many storage systems make this an especially
powerful observation: $(i)$ read-intensive workloads in real-world
deployments~\cite{memcache-nsdi13, workload-fb-sigmetrics12}
mean that fewer objects are written over time, and $(ii)$ emerging in-memory
storage systems~\cite{ramcloud, redis, memcached,
facebook-tao-sigmod12} complete writes faster, reducing the interval
of inconsistency.

The challenge in leveraging this insight lies in efficiently detecting which
objects are dirty, i.e., have pending updates. Implementing this functionality
in a server would make the system be bottlenecked by the server, instead of scaling
out with the number of storage replicas.
\sysname demonstrates that this
idea can be realized on-path in the network with programmable switches at line
rate, with no performance penalties. The core component of \sysname is a
read-write conflict detection module in the switch data plane that monitors all
requests to the system and tracks the dirty set. The switch detects whether a
read conflicts with pending writes, and if not, the switch sends it
directly to one of the replicas. Other requests are executed according to the
normal protocol. This design exploits two principal benefits of
in-network processing: $(i)$ the central location of the switch on the data path
that allows it to monitor traffic to the cluster, and $(ii)$ its capability for
line-rate, near-zero overhead processing.

\sysname can be viewed as a new take on network anycast in the context of
replicated storage. Different from recent work that directly embeds \emph{application
data and logic} into programmable switches~\cite{netcache, netchain, netpaxos,
netpaxos-ccr} , \sysname still uses switches for \emph{forwarding}, but in an
application-aware manner by tracking \emph{application metadata}
(contended object IDs, not values) with much less switch memory.

\sysname is a general approach. It augments existing replication protocols
with the ability to serve reads from any replica, and does not sacrifice
fault tolerance or strong consistency (i.e., linearizability).
As a result, it can be applied to both major classes of replication
protocols---primary-backup and quorum-based. We have applied
\sysname to a range of representative protocols, including
primary-backup~\cite{primary-backup}, chain
replication~\cite{chain-replication},
Viewstamped Replication~\cite{viewstamped,liskov12:_views_replic_revis}, and NOPaxos~\cite{nopaxos}.

In summary, this paper demonstrates that:
\begin{itemize}[leftmargin=*]
  \item The \sysname architecture can achieve near-linear scalability with
  near-zero overhead by moving conflict detection to an in-network component.
  (\S\ref{sec:overview}, \S\ref{sec:design})

  \item The \sysname conflict detection mechanism can be implemented in
  the data plane of new-generation programmable switches and run at
  line rate. (\S\ref{sec:switch})

  \item Many replication protocols can be integrated with \sysname while
  maintaining linearizability. (\S\ref{sec:protocols})
\end{itemize}

\medskip
We implement a \sysname prototype using a cluster of servers
connected by a Barefoot Tofino switch and integrate it with Redis.
Our experiments show that \sysname
improves the throughput of the replication
protocols by up to 10$\times$ for a replication factor of 10, providing near-linear
scalability up to the limit of our testbed.
We provide a proof of correctness and a model-checked TLA+ specification as appendices.
Of course, the
highest possible write throughput is that of \emph{one} replica, since
writes have to be processed by all replicas. This can be achieved by
chain replication~\cite{chain-replication} and NOPaxos~\cite{nopaxos}.
\sysname fills in the missing piece for reads: it demonstrates how to make reads
scalable without sacrificing either write performance or consistency.


\renewcommand{\secname}{Background}
\section{\secname}
\label{sec:background}

An ideal replicated system provides single-system
linearizability~\cite{HerlihyWing90}---it appears as though
operations are being executed, one at a time, on a single replica, in
a way that corresponds to the timing of operation invocations and
results. Many replication protocols can be used to ensure
this property. They fall primarily into two
classes---primary-backup protocols and quorum-based protocols.

\para{Primary-backup.} The primary-backup protocol~\cite{primary-backup}
organizes a system into a \emph{primary} replica, which is responsible for
determining the order and results of operations, and a set of \emph{backup}
replicas that execute operations as directed by the primary. This is typically
achieved by having the primary transfer state updates to the replicas after
operation execution. At any time, only one primary is in operation. Should it
fail, one of the backup replicas is promoted to be the new primary---a task
often mediated by an external configuration service~\cite{zookeeper,chubby} or
manual intervention. The primary-backup protocol is able to tolerate $f$ node
failures with $f$+$1$ nodes.

The primary-backup protocol has many variants. Chain
replication~\cite{chain-replication} is a high-throughput variant, and is
used in many storage systems~\cite{fawn, flexkv, hyperdex}. It organizes the replicas
into a chain. Writes are sent to the head and propagated to the tail;
reads are directly processed by the tail. The system throughput is
bounded by a single server---the tail.

\para{Quorum-based protocols.} Quorum-based protocols such as
Paxos~\cite{paxos98,paxos01} operate by ensuring that each operation is executed
at a quorum---typically a majority---of replicas before it is considered
successful. While they seem quite different from primary-backup protocols, the
conceptual gap is not as wide as it appears in practice. Many Paxos deployments
use the Multi-Paxos optimization~\cite{paxos01} (or, equivalently, Viewstamped
Replication~\cite{viewstamped} and Raft~\cite{raft}). One of the replicas
runs the first phase of the protocol to elect itself as a stable \emph{leader}
until it fails. It can then run the second phase repeatedly to execute multiple
operations and commit to other replicas, which is very similar to the
primary-backup protocol. System throughput is largely determined by the number
of messages that needs to be processed by the bottleneck node, i.e., the leader. A
common optimization allows the leader to execute reads without coordinating with
the others, by giving the leader a lease. Ultimately, however,
the system throughput is limited to
that of one server.

\renewcommand{\secname}{Towards Linear Scalability}
\section{\secname}
\label{sec:approach}

The replication protocols above can achieve, at best, the throughput
of a single server. With judicious optimization, they can allow reads
to be processed by one designated replica---the tail in chain
replication or the leader in Multi-Paxos. That single replica then
becomes the bottleneck. Read scalability, i.e., making system
throughput scale with the number of replicas, requires going further.

Could we achieve read scalability by allowing reads to be processed by
\emph{any} replica, not just a single designated one, without
coordination? Unfortunately, naively doing so could violate
consistency. Updates cannot be applied instantly across all the
replicas, so at any given moment some of the replicas may not be
consistent. We categorize the resulting anomalies into two kinds.

\para{Read-ahead anomalies.} A read-ahead anomaly occurs when a
replica returns a result that has not yet been committed. This might
reflect a future state of the system, or show updates that will never
complete. Neither is correct.

Consider the case of chain replication, where each replica would
answer a read with its latest known state. Specifically, suppose
there are three nodes, and the latest update to an object has been
propagated to nodes 1 and 2. A read on this object sent to either of
these nodes would return the new value. While this may not necessarily
seem problematic, it is a violation of linearizability. A subsequent
request to node 3 could still return the old value---causing a client
to see an update appearing and disappearing depending on which replica
it contacts.

\para{Read-behind anomalies.}
One might hope that these anomalies could be avoided by requiring
replicas to return the latest \emph{known committed} value.
Unfortunately, this introduces a second class of anomalies, where some
replicas may return a stale result that does not reflect the latest
updates.

Consider a Multi-Paxos deployment, in which replicas only execute an
operation once they have been notified by the leader that consensus
has been reached for that operation. Suppose that a client submits a
write to an object, and consider the instant just after the leader receives
the last response in a quorum. It can then execute the operation and
respond to the client. However, the other replicas do not know that
the operation is successful. If the client then executes a read to one
of the other replicas, and it responds---unilaterally---with its
latest committed value, the client will not see the effect of its
latest write.

\para{Protocols.}
We classify replication protocols based on the anomalies.
We refer to protocols that have each type of
anomalies as \emph{read-ahead protocols} and \emph{read-behind
protocols}, respectively. Of the protocols we discuss in this paper,
primary-backup and chain replication are read-ahead protocols, and
Viewstamped Replication/Multi-Paxos and NOPaxos are read-behind
protocols. Note that although the primary-backup systems are
read-ahead and the quorum systems are read-behind, this is not
necessarily true in general; read-ahead quorum protocols are also
possible, for example.

\subsection{\sysname Approach}

How, then, can we \emph{safely} and \emph{efficiently} achieve read scalability, without
sacrificing linearizability? The key is to view the system at the
\emph{individual object level}. The majority of objects are
quiescent, i.e., have no modifications in progress. These objects will be consistent
across at least a majority of replicas. In that case, any
replica can unilaterally answer a read for the object. While
modifications to an object \emph{are} in progress, reads
on the object must follow the full protocol.

Conceptually, \sysname achieves read scalability by introducing a new
component to the system, a request scheduler.
The request scheduler
monitors requests to the system to detect conflicts
between read and write operations.
Abstractly, it maintains a table of
objects in the system and tracks whether they are contended or uncontended,
i.e., the \emph{dirty set}. When there is no conflict, it directs reads to
any replica. The request is flagged so that the replica can
respond directly. When conflicts are detected, i.e., a concurrent
write is in progress, reads follow the normal protocol.

To allow the request scheduler to detect conflicts, it needs to be able to
interpose on all read and write traffic to the system. This means that the
request scheduler must be able to achieve very high throughput---implementing
the conflict detection in a server would still make the entire system be
bottlenecked by the server. Instead, we implement the request scheduler in the
network itself, leveraging the capability of programmable switches to run at
line rate, imposing no performance penalties.

\begin{figure}[t]
\centering
    \includegraphics[width=0.95\linewidth]{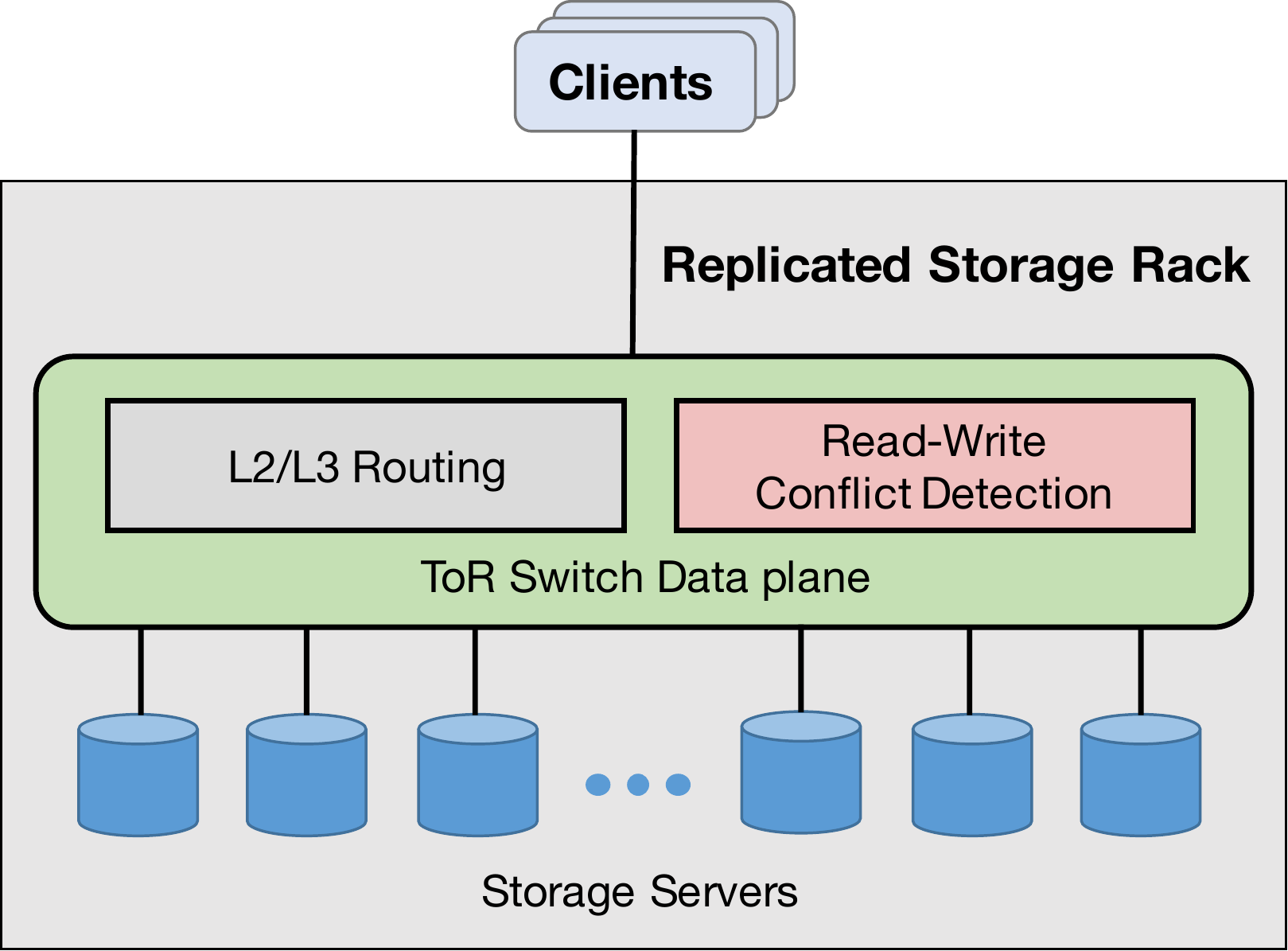}
\caption{\sysname architecture.}
\label{fig:overview_architecture}
\end{figure}

Conflict detection has been used before to achieve read scalability
for certain replicated systems. Specifically, CRAQ~\cite{craq}
provides read scalability for chain replication by tracking contended
and uncontended objects at the protocol level. This requires
introducing an extra phase to notify replicas when an object is clean
vs. dirty. \sysname's in-switch conflict detection architecture
provides two main benefits. First, it generalizes the approach to
support many different protocols---as examples, we have used \sysname
with primary-backup, chain replication, Viewstamped Replication, and
NOPaxos. Supporting the diverse range of replication protocols are in
use today is important because they occupy different points in the
design space: latency-optimized vs. throughput-optimized,
read-optimized vs. write-optimized, storage overhead vs. performance
under failure, etc. CRAQ is specific to chain replication, and it is
not clear that it is possible to extend its protocol-level approach to
other protocols. Second, \sysname's in-switch implementation avoids
imposing additional overhead to track the dirty set. As we show in
Section~\ref{sec:evaluation:generality}, CRAQ is able to achieve read
scalability only at the expense of a decrease in write throughput.
\sysname has no such cost.

\subsection{Challenges}

Translating the basic model of the request scheduler above to a working
implementation presents several challenges:

\begin{enumerate}
\item How can we expose system state to the request scheduler so that it can
  implement conflict detection?
\item How do we ensure the switch's view of which
  objects are contended matches the system's reality, even as messages
  are dropped or reordered by the network? Errors here may
  cause clients to see invalid results.
\item How do we implement this functionality fully within a switch
  data plane? This drastically limits computational and storage capacity.
\item What modifications are needed to replication protocols to ensure
  they provide linearizability when integrated with \sysname?
\end{enumerate}


\renewcommand{\secname}{Harmonia Architecture}
\section{\secname}
\label{sec:overview}

\sysname is a new replicated storage architecture that achieves
near-linear scalability without sacrificing consistency using
in-network conflict detection. This is implemented using an in-switch
request scheduler, which is located on the path between the
clients and server nodes. In many enterprise and cloud scenarios where storage
servers are located in a dedicated storage rack, this
can be conveniently achieved by using the top-of-rack (ToR) switch,
as shown in \autoref{fig:overview_architecture}.
We discuss other, more scalable deployment options in \S\ref{sec:switch:issue}.

\para{Switch.} The switch implements the \sysname request scheduler.
It is responsible for detecting read-write conflicts.
It behaves as a
standard L2/L3 switch, but provides additional conflict detection
functionality for packets with a reserved L4 port. This makes \sysname fully
compatible with normal network functionality.

The \emph{read-write conflict detection module} identifies whether a read
has a conflict with a pending write. It does this by maintaining a
sequence number, a dirty set and the last-committed point (\S\ref{sec:design}). We show
how to handle requests with this module while \emph{guaranteeing consistency}
(\S\ref{sec:design}), and how to use the register arrays to design a
hash table supporting the necessary operations \emph{at line rate}
(\S\ref{sec:switch}).

While the \sysname switch can be rebooted or replaced and is not a
single point of failure of the storage system, there is only a single active \sysname switch
for conflict detection at any time. The replication protocol enforces this invariant by
periodically agreeing on which switch to use for each time slice
(\S\ref{sec:design:failure}).

\para{Storage servers.} The storage servers store objects and serve
requests, using a replication protocol for consistency and fault
tolerance. \sysname requires minimal modifications to the replication
protocol (\S\ref{sec:protocols}). It incorporates
a shim layer in each server to translate custom \sysname request packets to API
calls to the storage system.

\para{Clients.} \sysname provides a client library for applications to
access the storage system, which provides a similar
interface as existing storage systems.
e.g., \texttt{GET} and
\texttt{SET} for Redis~\cite{redis} in our prototype.
The library translates between API calls and \sysname packets. It exposes
two important fields in the packet header to \sysname switch: the operation type
(read or write), and the affected object ID.


\renewcommand{\secname}{In-Network Conflict Detection}
\section{\secname}
\label{sec:design}


\paraf{Key idea.} \sysname employs a switch as a conflict detector,
which tracks the dirty set, i.e., the set of contended objects.
While the available switch memory is limited, the set of objects with
outstanding writes is small compared to the overall storage size of
the system, making this set readily implementable on the switch.

To implement conflict detection,
a \sysname switch tracks three
pieces of state:
$(i)$ a \emph{monotonically-increasing sequence number},\footnote{We use the term
sequence number here for simplicity. Sequentiality
is not necessary; a strictly increasing timestamp would suffice.}
which is incremented and inserted into each write,
$(ii)$ a \emph{dirty set}, which additionally tracks the largest sequence number of
the outstanding writes to each object, and
$(iii)$ the \emph{last-committed point}, which tracks the sequence number of the
latest write committed by the system known to the switch.

\begin{algorithm}[t!]
\caption{ProcessRequestSwitch(pkt)}
\begin{itemize}[noitemsep,nolistsep]
    \item[--] $seq$: sequence number at switch
    \item[--] $dirty\_set$: map containing largest sequence number for each object with pending writes
    \item[--] $last\_committed$: largest known committed sequence number
\end{itemize}
\begin{algorithmic}[1]
\If{$pkt.op == \msg{write}$}
    \State $\seq \gets \seq + 1$
    \State $pkt.seq \gets seq$
    \State $dirty\_set.put(pkt.obj\_id, seq)$
\ElsIf{$pkt.op == \msg{write-completion}$}
    \If{$pkt.seq \ge dirty\_set.get(pkt.obj\_id)$}
        \State $dirty\_set.delete(obj\_id)$
    \EndIf
    \State
    \begin{varwidth}[t]{\linewidth}
    $last\_committed \gets max(last\_committed, pkt.seq)$
    \end{varwidth}
\ElsIf{$pkt.op == \msg{read}$}
    \If{$ \neg dirty\_set.contains(pkt.\mathit{obj\_id})$}
        \State $pkt.last\_committed \gets last\_committed$
        \State $pkt.dst \gets$ random replica
    \EndIf
\EndIf
\State Update packet header and forward
\end{algorithmic}
\label{alg:switch}
\end{algorithm}

The dirty set allows the switch to detect when a read contends with
ongoing writes. When they do not, \sysname can send the read to a
single random replica for better performance. Otherwise, these reads
are passed unmodified to the underlying replication protocol. The sequence number
disambiguates concurrent writes to the same object, permitting the
switch to store \emph{only one} entry per contended object in the
dirty set. The last-committed sequence number is used to ensure
linearizability in the face of reordered messages, as will be described
in \S\ref{sec:design:reordering}.



We now describe in detail the interface and how it is used.
We use the primary-backup
protocol as an example in this section, and describe adapting other
protocols in \S\ref{sec:protocols}.

\begin{figure*}[t]
    \centering
    \subfigure[Write.]{
        \label{fig:design_query1}
        \includegraphics[width=0.45\linewidth]{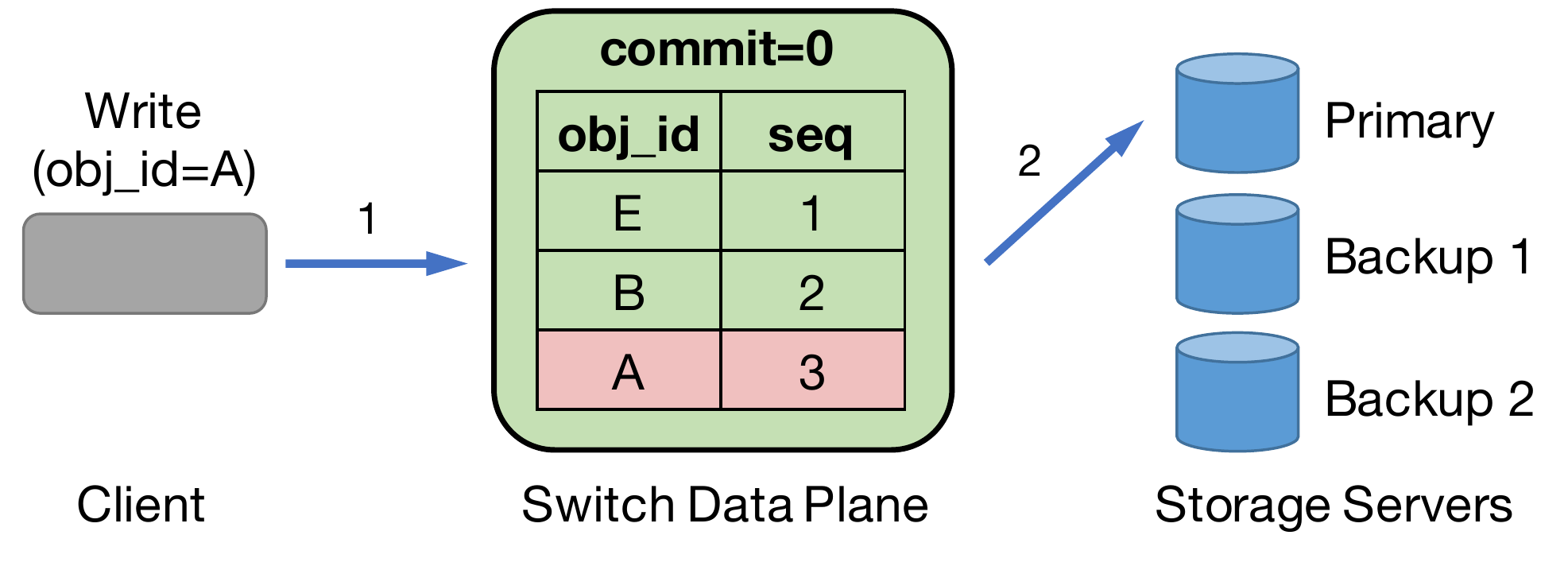}}
    \subfigure[Write completion is piggybacked in write reply.]{
        \label{fig:design_query2}
        \includegraphics[width=0.45\linewidth]{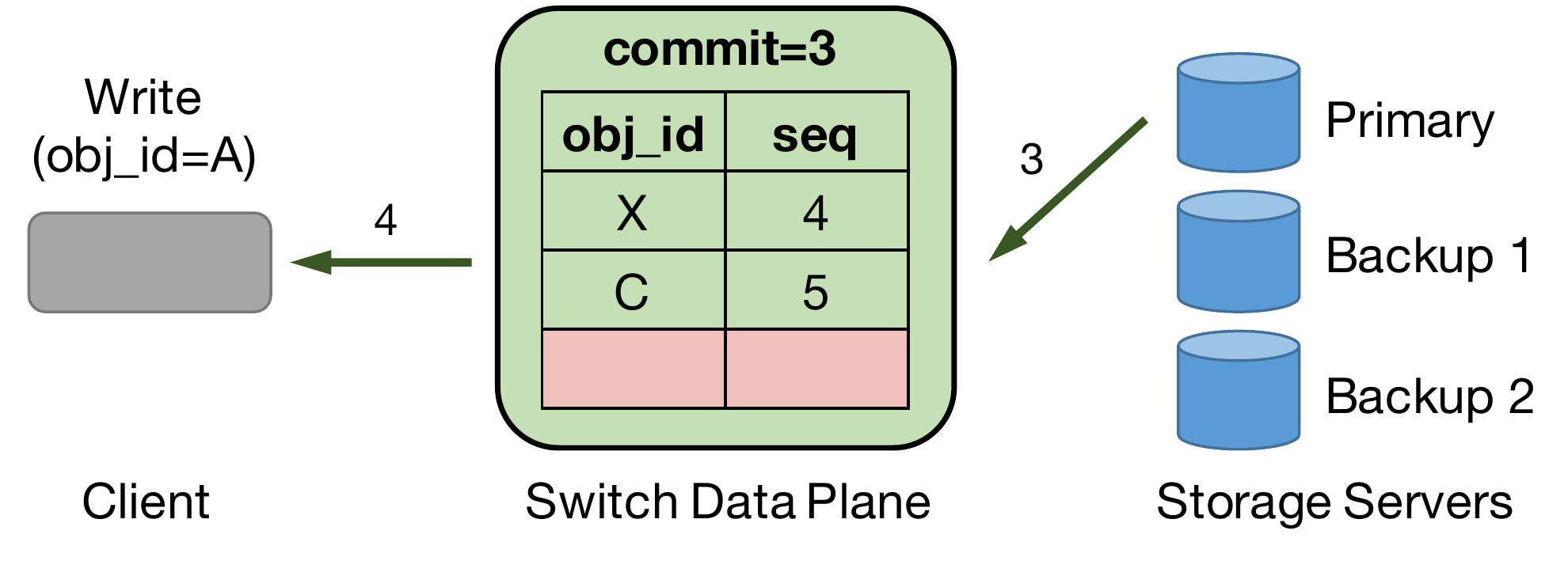}}
    \subfigure[Read and reply on object with pending writes.]{
        \label{fig:design_query3}
        \includegraphics[width=0.45\linewidth]{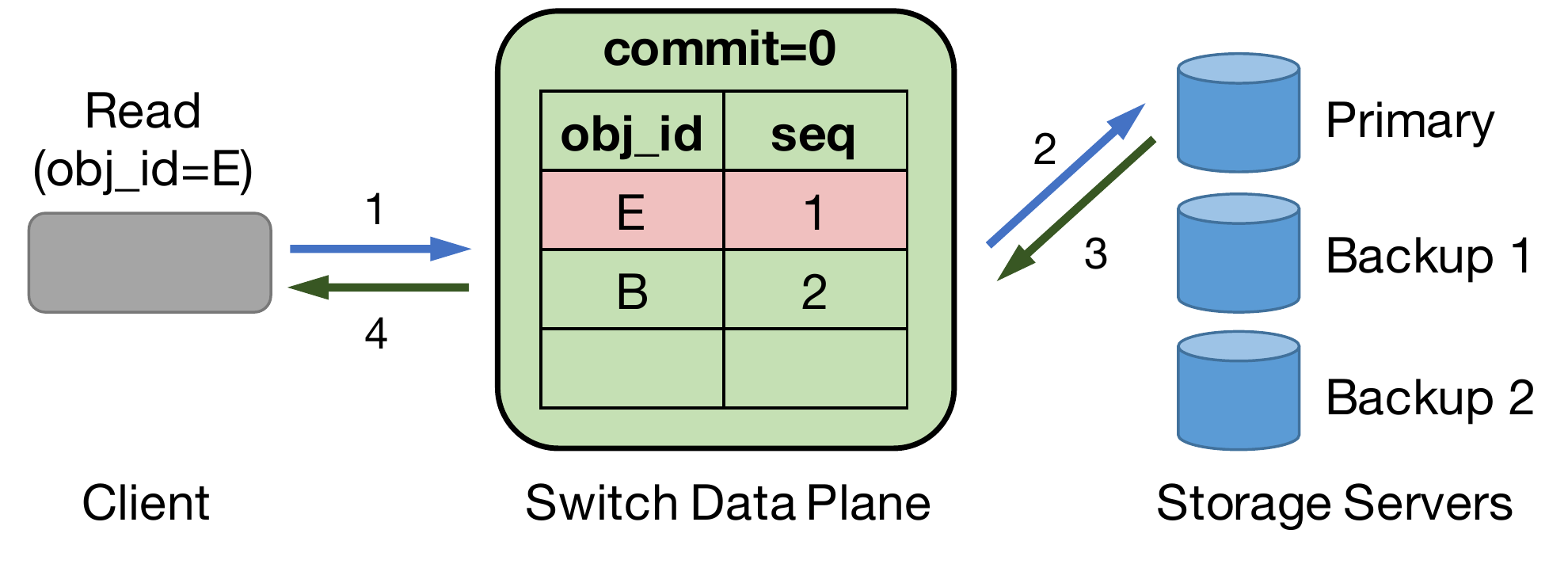}}
    \subfigure[Read and reply on object without pending writes.]{
        \label{fig:design_query4}
        \includegraphics[width=0.45\linewidth]{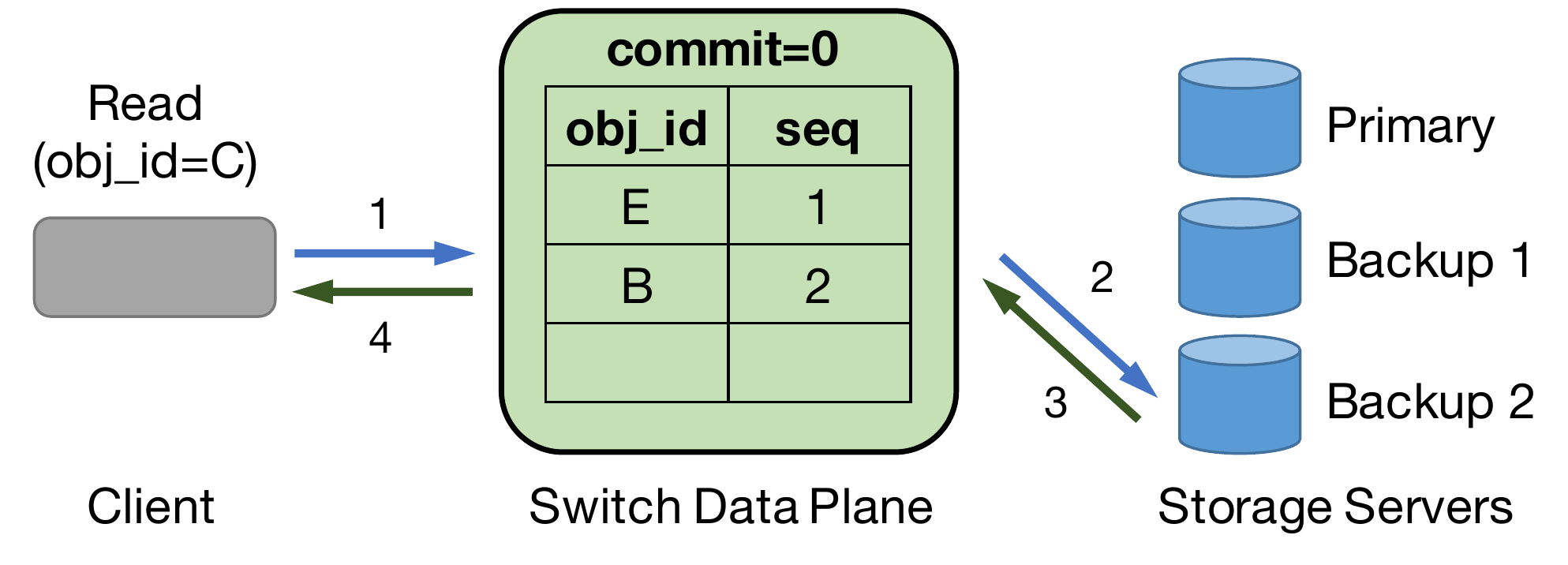}}
    \vspace{-0.1in}
    \caption{Handling different types of storage requests.}
    \vspace{-0.1in}
    \label{fig:design_query}
\end{figure*}

\subsection{Basic Request Processing}
\label{sec:design:request}

The \sysname in-switch request scheduler processes three types of operations:
\msg{read}, \msg{write}, and \msg{write-completion}. For each replicated system,
the switch is initialized with the addresses of the replicas and tracks the
three pieces of state described above: the dirty set, the monotonically-increasing
sequence number, and the sequence number of the latest committed
write. The handling of a single request is outlined in pseudo code
in Algorithm~\ref{alg:switch}.

\para{Writes.} All writes are assigned a sequence number
by \sysname. The objects being written are added to the dirty set in the switch, and
associated with the sequence number assigned to the write (lines 1--4).

\para{Write completions.} Write completions are special messages sent by the
replication protocol once a write is fully committed. If a write is the last
outstanding write to the object, the object is removed from the dirty set in the
switch. The last-committed sequence number maintained by the switch is then
updated (lines 5--8).

\para{Reads.} Reads are routed by the switch, either
through the normal replication protocol or to a randomly selected replica, based
on whether the object being read is contended or not. The switch
checks whether the ID of the object being read is in the dirty set. If
so, the switch sends the request unmodified, causing it to be processed
according to the normal replication protocol; otherwise, the read
is sent to a random replica for better performance (lines
9--12). The request is also stamped with the last-committed sequence number on the
switch for linearizability, as will be discussed in \S\ref{sec:design:reordering}.

\para{Example.} Figure~\ref{fig:design_query} shows an example workflow.
Figure~\ref{fig:design_query1} and \ref{fig:design_query2} show a
write operation. The switch adds \texttt{obj\_ID=A} to
the dirty set when it observes the write. It removes the object ID
upon the write completion, which can be piggybacked in the write reply, and updates
the last-committed sequence number. What is in the dirty set determines
how reads are handled. In Figure~\ref{fig:design_query3}, the
read is for object \texttt{E}, which has pending writes, so
the request is sent to the primary for guaranteeing consistency. On the other
hand, in Figure~\ref{fig:design_query4}, object \texttt{C} is not in
the dirty set, so it is sent to the second backup for better
performance.

\subsection{Handling Network Asynchrony}
\label{sec:design:reordering}

In an ideal network, where messages are processed in order, only using
the dirty set would be sufficient to guarantee consistency.
In a real, asynchronous network, just
because a read to an object was uncontended when the request passed
through the switch does not mean it will still be so when the request
arrives at a replica: the message might have been delayed so long that
a new write to the same object has been partially processed. \sysname
avoids this using the sequence number and last-committed point.

\para{Write order requirement.} The key invariant of the dirty set
requires that an object not be removed until \emph{all} concurrent
writes to that object have completed. Since the \sysname switch only
keeps track of the largest sequence number for each object, \sysname
requires that the replication protocol processes writes only in sequence
number order. This is straightforward to implement in a replication
protocol, e.g., via tracking the last received sequence
number and discarding any out-of-order writes.

\para{Dropped messages.} If a \msg{write-completion} or forwarded
\msg{write} message is dropped, an object may remain in the dirty set
indefinitely. While in principle harmless---it is always safe to
consider an uncontended object dirty---it may cause a performance
degradation. However, because writes are processed in order, any stray
entries in the dirty set can be removed as soon as a
\msg{write-completion} message with a higher sequence number arrives.
These stray objects can be removed by the switch as it processes reads
(i.e., by removing the object ID if its sequence number in the dirty
set is less than or equal to the last committed sequence number). This
removal can also be done periodically.

\begin{figure*}[t]
\centering
    \includegraphics[width=\textwidth]{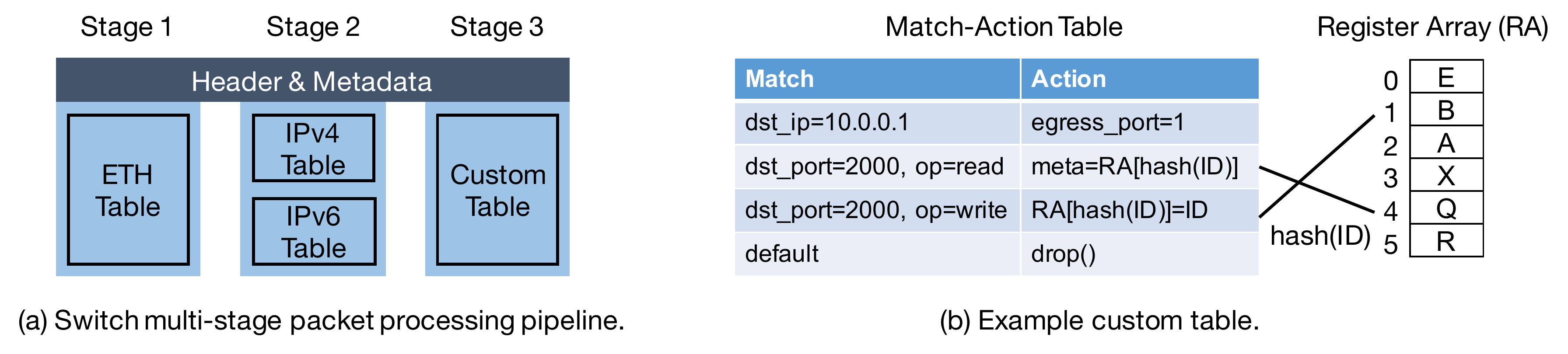}
\vspace{-0.3in}
\caption{Switch data plane structure.}
\vspace{-0.1in}
\label{fig:design_switch}
\end{figure*}

\para{Last-committed point for linearizability.} \sysname aims to \emph{maintain
linearizability}, even when the network can arbitrarily delay or
reorder packets. The switch uses its dirty set to ensure that a
single-replica read does not contend with ongoing writes \emph{at the
time it is processed by the switch}. This is not sufficient
to entirely eliminate inconsistent reads. However,
the last-committed sequence number stamped into the read will provide enough information
for the recipient to compute whether or not processing the read
locally is safe. In the primary-backup, a write after a read on the same object
would have a \emph{higher} sequence number than the last-committed point carried in the read.
As such, a backup can detect the conflict even if the write
happens to be executed at the backup before the read arrives, and then send the read
to the primary for linearizability. Detailed discussion on adapting
protocols is presented in \S\ref{sec:protocols}.

\subsection{Failure Handling}
\label{sec:design:failure}


\sysname would be of limited utility to replication protocols if the
switch were a single point of failure. However, because the switch
only keeps \emph{soft state} (i.e., the dirty set, the sequence
number and the last-committed point), it can be rebooted or replaced.
The failure of a switch will
result in temporary performance degradation.
The \sysname failure handling protocol restores the ability for the new switch
to send writes and reads through the normal case first, and then restores the
single-replica read capability, limiting the downtime of the system to a
minimum. As such, the switch is not a single point of failure, and can be safely
replaced without sacrificing consistency.

\para{Handling switch failures.} When the switch fails, the operator
either reboots it or replaces it with a backup switch. While the
switch only maintains soft state, care must be taken to preserve
consistency during the transition. As in prior
systems~\cite{nopaxos,eris}, the sequence numbers in \sysname are
augmented with the switch's unique ID and ordered lexicographically
considering the switch's ID first. This ensures that no two writes
have the same sequence number. Next, before a newly initialized switch
can process writes, \sysname must guarantee that single-replica reads
issues by the previous switch will not be processed. Otherwise, in
read-behind protocols, the previous switch and a lagging replica could
bilaterally process a read without seeing the results of the latest
writes, resulting in read-behind anomalies. To prevent these
anomalies, the replication protocol periodically agrees to allow
single-replica reads from the current switch for a time period. Before
allowing the new switch to send writes, the replication must agree to
refuse single-replica reads with smaller switch IDs, and either the
previous switch's time should expire or all replicas should agree to
cut it short. This technique is similar in spirit to the leases used
as an optimization to allow leader-only reads in many protocols.
Finally, once the switch receives its first \msg{write-completion}
with the new switch ID, both its last-committed point and dirty set
will be up to date, and it can safely send single-replica reads.

\para{Handling server failures.} The storage system handles a server failure
based on the replication protocol, and notifies the switch control plane at the
beginning and end of the process. The switch first removes the failed replica
from the replica addresses in the data plane, so that following requests would
not be scheduled to it. After the failed replica is recovered or a replacement
server is added, the switch adds the corresponding address to the replica
addresses, enabling requests to be scheduled to the server.


\renewcommand{\secname}{Data Plane Design and Implementation}
\section{\secname}
\label{sec:switch}

Can the \sysname approach be supported by a real
switch? We answer this in the affirmative by showing how to implement it
for a programmable switch~\cite{rmt, p4-ccr} (e.g., Barefoot's
Tofino~\cite{tofino}), and evaluate its resource usage.

\subsection{Data Plane Design}
\label{sec:switch:design}

The in-network conflict detection module is implemented in the data plane of a
modern programmable switch. The sequence number and last-committed point can be stored
in two registers, and the dirty set can be stored in a hash table implemented
with register arrays. While previous work has shown how to use the register
arrays to store key-value data in a switch~\cite{netcache, netchain}, our work
has two major differences: $(i)$ the hash table only needs to store object IDs,
instead of both IDs and values; $(ii)$ the hash table needs to support
insertion, search and deletion operations at line rate, instead of only search.
We provide some background on programmable switches, and then describe
the hash table design.

\begin{figure*}[t]
\centering
    \includegraphics[width=\textwidth]{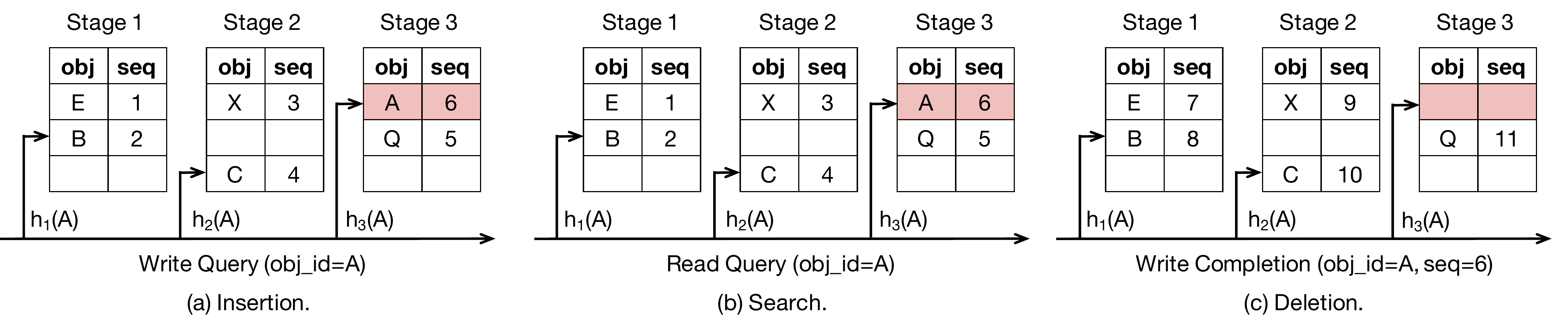}
\vspace{-0.25in}
\caption{Multi-stage hash table design that supports insertion, search and deletion
in the switch data plane.}
\vspace{-0.1in}
\label{fig:design_hash_detail}
\end{figure*}

\para{Switch data plane structure.} Figure~\ref{fig:design_switch}
illustrates the basic data plane structure of a modern programmable switching ASIC.
The packet processing pipeline contains multiple stages, as shown in
Figure~\ref{fig:design_switch}(a). Packets are processed by the stages one after
one. Match-action tables are the basic element used to process packets.
If two tables have no dependencies, they can be placed in the same
stage, e.g., IPv4 and IPv6 tables in
Figure~\ref{fig:design_switch}(a).

A match-action table contains a list of rules that specifies how packets are
processed, as shown in Figure~\ref{fig:design_switch}(b). A match
in a rule specifies a header pattern, and the action specifies how the matched packets should
be processed. For example, the first rule in Figure~\ref{fig:design_switch}(b)
forwards packets to egress port 1 for packets with destination IP 10.0.0.1. Each
stage also contains register arrays that can be accessed at line rate.
Programmable switches allow developers to define custom packet formats and
match-action tables to realize their own protocols. The example in
Figure~\ref{fig:design_switch}(b) assumes two custom fields in the packet
header, which are \texttt{op} for operation and \texttt{ID} for object ID. The
second and third rules perform read and write on the register array based on
\texttt{op} type, and the index of the register array is computed by the hash of
\texttt{ID}.

Developers use a domain-specific language such as P4~\cite{p4-ccr} to write a
program for a custom data plane, and then use a complier to
compile the program to a binary that can be loaded to the switch. Each stage has
resource constraints on the size of match-action tables (depending on the
complexity of matches and actions) and register arrays (depending on the length
and width).

\para{Multi-stage hash table with register arrays.} The switch data
plane provides basic constructs for the conflict
detection module. A register array can be naturally used to store the object
IDs. We can use the hash of an object ID
as the index of the register array, and store the object ID as the value in the
register slot. One challenge is
to handle hash collisions, as the switch can only perform a limited, fixed number
of operations per stage. Collision resolution for hash tables is a well-studied
problem, and the multi-stage structure of the switch data plane makes it natural to
implement open addressing techniques to handle collisions. Specifically, we allocate
a register array in each stage and use different hash functions for different stages.
In this way, if several objects collide in one stage, they are less likely to collide
in another stage.
Figure~\ref{fig:design_hash_detail} shows the design.
\begin{itemize}[leftmargin=*]
    \item \textbf{Insertion.} For a write, the object ID is inserted to
    the first stage with an empty slot for the object
    (Figure~\ref{fig:design_hash_detail}(a)). The write is dropped if no slot
    is available.

    \item \textbf{Search.} For a read, the switch iterates over all stages
    to see if any slot contains the same object ID
    (Figure~\ref{fig:design_hash_detail}(b)).

    \item \textbf{Deletion.} For a write completion, the switch iterates over all
    stages and removes the object ID
    (Figure~\ref{fig:design_hash_detail}(c)).
\end{itemize}

\para{Variable-length object IDs.} Many systems use
\emph{variable-length} IDs,
but due to switch limitations, \sysname must use \emph{fixed-length}
object IDs for conflict detection. However, variable-length IDs can be
accommodated by having the clients store fixed-length hashes of the
original ID in the \sysname packet header; the original ID is sent in
the packet payload. \sysname then uses the fixed-length hashes for
conflict detection. Hash collisions may degrade performance but
cannot introduce consistency issues; they can only cause \sysname to
believe a key is contended, not vice versa.

\subsection{Resource Usage}
\label{sec:switch:resource}

Switch on-chip memory is a limited resource. Will there be enough memory to
store the entire dirty set of pending writes? Our key insight is that
since the switch only performs conflict detection, it does not need to
store actual data, but only the object IDs. This is in
contrast to previous designs like NetCache~\cite{netcache} and NetChain~\cite{netchain}
that use switch memory for object storage directly. Moreover, while the storage
system can store a massive number of objects, the number of writes at any given time is
small, implying that the dirty set is far smaller than the storage size.

Suppose we use $n$ stages and each stage has a register array with $m$ slots.
Let the hash table utilization be $u$ to account for hash collisions. The switch
is able to support up to $unm$ writes at a given time.  Suppose the duration of
each write is $t$, and the write ratio is $w$. Then the switch is able to
support $unm/t$ writes per second---or a total throughput of $unm/(wt)$---before
exhausting memory. As a concrete example, let $n$=3, $m$=64000, $u$=50\%, $t$=1
ms and $w$=5\%.
The switch can support a write throughput of 96 million requests per
second (MRPS), and a total throughput of 1.92 billion requests per
second (BRPS). Let both the object ID and sequence number be 32 bits.
It only consumes 1.5MB memory. Given that a commodity switch has
10--20 stages and a few tens of MB memory~\cite{flexswitch, netcache, netchain},
this example only conservatively uses a small fraction of switch memory.

\subsection{Deployment Issues}
\label{sec:switch:issue}

We imagine two possible deployment scenarios for \sysname. First, it
can be easily integrated with clustered storage systems, such as
on-premise storage clusters for enterprises and specialized storage
clusters in the cloud. As shown in
Figure~\ref{fig:overview_architecture}, all servers are
deployed in the same rack, allowing the ToR switch to be the central
location that sees all the storage traffic. We only need to add
\sysname's functionality to the ToR switch.

For cloud-scale storage, replicas may be distributed among
many different racks for fault tolerance. Placing the \sysname
scheduler on a ToR switch, which only sees storage traffic to its own
rack, does not suffice. Instead, we leverage a network serialization
approach~\cite{nopaxos,specpaxos}, where all traffic destined for a
replica group is redirected through a designated switch. Prior work
has shown that, with careful selection of the switch (e.g., a spine
switch in a two-layer leaf-spine network), this need not increase
latency~\cite{nopaxos}. Nor does it impose a throughput penalty:
different replica groups can use different switches as their request
scheduler, and the capacity of a switch far exceeds that of a single
replica group.


\renewcommand{\secname}{Adapting Replication Protocols}
\section{\secname}
\label{sec:protocols}

Safely using a replication protocol with \sysname imposes three
responsibilities on the protocol. It must:

\begin{enumerate}
\item process writes only in sequence number order;
\item allow single-replica reads only from one active switch at a time; and
\item ensure that single-replica reads for uncontended objects still return
  linearizable results.
\end{enumerate}

Responsibility (1) can be handled trivially by dropping messages that
arrive out of order, and responsibility (2) can be implemented in the
same manner as leader leases in traditional replication protocols. We
therefore focus on responsibility (3) here. How this is handled is
different for the two categories of read-ahead and read-behind protocols.

To demonstrate the generality of our approach, we apply \sysname to
representative protocols from both classes: primary-backup protocols
(including chain replication), as well as leader-based quorum
protocols (Viewstamped Replication/Multi-Paxos) and recent
single-phase consensus protocols (NOPaxos). For each, we explain the
necessary protocol modifications and give a brief argument for
correctness. A full proof of correctness is in
Appendix~\ref{app:proof}, and a model-checked TLA+ specification of
\sysname is in Appendix~\ref{app:tla}.

\subsection{Requirements for Linearizability}

Let us first specify the requirements that must be satisfied for a
\sysname{}-adapted protocol to be correct. We only consider systems where the
underlying replication protocol is linearizable. All write operations are
processed by the replication protocol based on the sequence number order.
We need only, then, consider the read operations.
The following two properties are sufficient for linearizability.

\begin{itemize}[leftmargin=*]
\item \textbf{P1. Visibility.} A read operation sees the effects of all
  write operations that finished before it started.
\item \textbf{P2. Integrity.} A read operation will not see the
  effects of any write operation that had not committed at the time
  the read finished.
\end{itemize}

In the context of \sysname, read operations follow the normal-case
replication protocol if they refer to an object in the dirty set, and
hence we need only consider the fast-path read operations executed at a
single replica. For these, P1 can equivalently be stated as follows.
\begin{itemize}[leftmargin=*]
\item \textbf{P1. Visibility}.
  The replication protocol must only send a completion notification
  for a write to the scheduler if any subsequent single-replica read
  sent to any replica will reflect the effect of the write operation.
\end{itemize}

\noindent

\subsection{Read-Ahead Protocols}

Both primary-backup and chain replication are read-ahead protocols
that cannot have read-behind anomalies, because they only reply to
the client once an operation has been executed on all replicas. As a
result, they inherently satisfy P1. We adapt them to send a
\msg{write-completion} notification to the switch at the same time as
responding to the client.

However, read-ahead anomalies \emph{are} possible: reads naively executed at
a single replica can reflect uncommitted results. We use the
last-committed sequence number provided by the \sysname switch to
prevent this. When a replica receives a fast-path read for object $o$,
it checks that the last-committed sequence number attached to the
request is at least as large as the latest write applied to $o$. If
not, it forwards the request to the primary or tail, to be executed
using the normal protocol. Otherwise, this implies that all writes to
$o$ processed by the replica were committed at the time the read was
handled by the switch, satisfying P2.

\subsection{Read-Behind Protocols}

We have applied \sysname to two quorum protocols: Viewstamped
Replication~\cite{viewstamped,liskov12:_views_replic_revis}, a
leader-based consensus protocol equivalent to
Multi-Paxos~\cite{paxos01} or Raft~\cite{raft}, and NOPaxos~\cite{nopaxos}, a
network-aware, single-phase consensus protocol. Both are read-behind
protocols. Because replicas in these protocols only execute operations
once they have been committed, P2 is trivially satisfied.

Furthermore, because the last committed point in the \sysname switch
is greater than or equal to the sequence numbers of all writes removed
from its dirty set, replicas can ensure visibility (P1) by rejecting
(and sending to the leader for processing through the normal protocol)
all fast-path reads whose last committed points are larger than that of
the last locally \emph{committed and executed} write.

In read-behind protocols, \msg{write-completion}s can be sent along
with the response to the client. However, in order to reduce the
number of rejected fast-path reads, we delay \msg{write-completion}s
until the write has likely been executed on all replicas.


\para{Viewstamped replication.}
For Viewstamped Replication,
we add an additional phase to operation processing that ensures a
quorum of replicas have \emph{committed and executed} the operation.
Concurrently with responding to the client, the VR leader sends a
\msg{commit} message to the other replicas. Our additional phase calls
for the replicas to respond with a \msg{commit-ack} message.\footnote{%
  These messages can be piggybacked on the next \msg{prepare} and
  \msg{prepare-ok} messages, eliminating overhead.
} Only once the leader receives a quorum of \msg{commit-ack} messages
for an operation with sequence number $n$ does it send a
$\langle\msg{write-completion}, \textit{object\_id}, n\rangle$
notification.


\para{NOPaxos.}
NOPaxos~\cite{nopaxos} uses an in-network sequencer to enable a
single-round, coordination-free consensus protocol. It is a natural
fit for \sysname, as both the sequencer and \sysname's request
scheduler can be deployed in the same switch. Although NOPaxos
replicas do not coordinate while handling operations, they already run
a periodic synchronization protocol to ensure that all replicas have
executed a common, consistent prefix of the log~\cite{nopaxos-tr} that
serves the same purpose as the additional phase in VR. The only
\sysname modification needed is for the leader, upon completion of a
synchronization, to send
$\langle\msg{write-completion}, \textit{object\_id}, commit\rangle$
messages for all affected objects.

\renewcommand{\secname}{Implementation}
\section{\secname}
\label{sec:implementation}

We have implemented a \sysname prototype and integrated it with Redis~\cite{redis}.
The switch data plane is implemented in P4~\cite{p4-ccr} and is compiled to
Barefoot Tofino ASIC~\cite{tofino} with Barefoot Capilano software suite~\cite{Capilano}. We
use 32-bit object IDs, and use 3 stages for the hash table.
Each stage provides 64K slots to store the object IDs, resulting
in a total of 192K slots for the hash table.

The shim layer in the
storage servers is implemented in C++. It communicates with clients using
\sysname packets, and uses hiredis~\cite{hiredis}, which is the official C
library of Redis~\cite{redis}, to read from and write to Redis. In additional to
translate between \sysname packets and Redis operations, the shim layers in the
servers also communicate with each other to implement replication protocols. We have
integrated \sysname with multiple representative replication protocols
(\S\ref{sec:evaluation:generality}). We
use the pipeline feature of Redis to batch requests to Redis.
Because Redis is single-threaded, we run eight Redis
processes on each server to maximize per-server throughput.
Our prototype
is able to achieve about 0.92 MQPS for reads and 0.8 MQPS for writes on a single
Redis server.

The client library is implemented in C. It generates
mixed read and write requests to the storage system, and
measures the system throughput and latency.

\renewcommand{\secname}{Evaluation}
\section{\secname}
\label{sec:evaluation}

\begin{figure}[t]
    \centering
    \subfigure[Read-only workload.]{
        \label{fig:eval_read_latency}
        \includegraphics[width=0.48\linewidth]{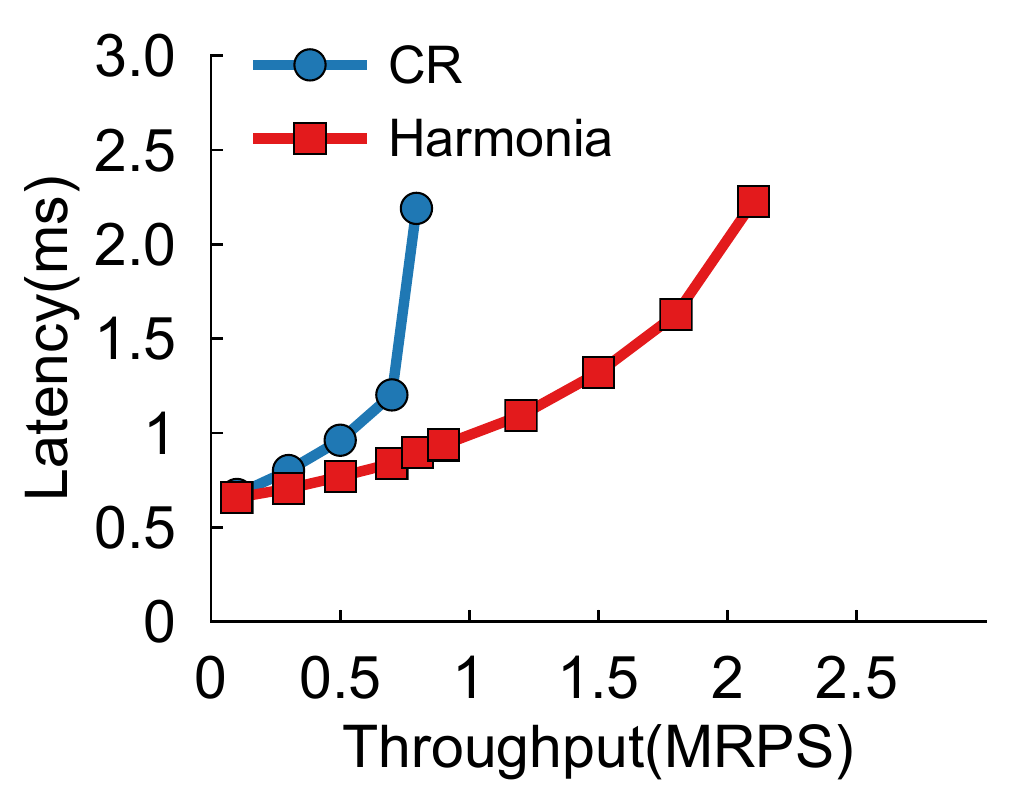}}
    \subfigure[Write-only workload.]{
        \label{fig:eval_write_latency}
        \includegraphics[width=0.48\linewidth]{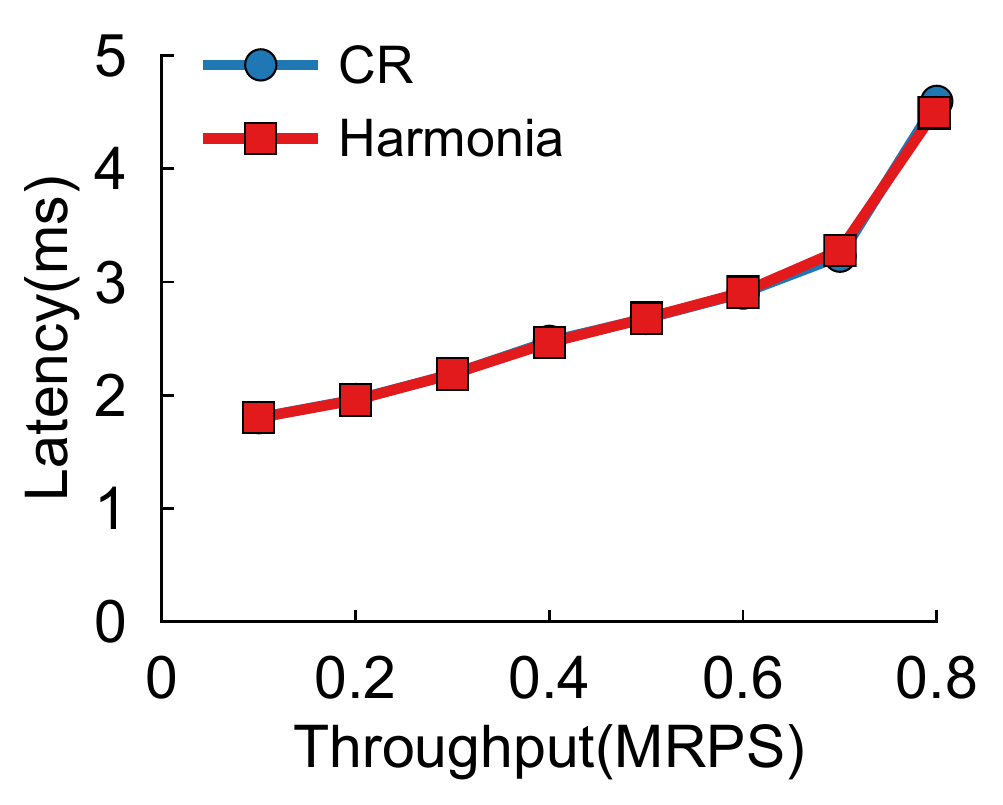}}
    \vspace{-0.1in}
    \caption{Throughput vs. latency for reads and writes.}
    \label{fig:eval_throughput_latency}
\end{figure}

We provide experimental results to demonstrate that \sysname provides
significant throughput and latency improvements (\S\ref{sec:evaluation:performance}),
scales out with the number of replicas (\S\ref{sec:evaluation:scalability}),
is resource efficient (\S\ref{sec:evaluation:memory}),
is general to many replication protocols (\S\ref{sec:evaluation:generality}),
and handles failures gracefully (\S\ref{sec:evaluation:failure}).

\subsection{Methodology}
\label{sec:evaluation:methodology}

\para{Testbed.} Our experiments are conducted on a testbed consisting of twelve
server machines connected by a 6.5 Tbps Barefoot Tofino switch. Each server is
equipped with an 8-core CPU (Intel Xeon E5-2620 @ 2.1GHz), 64 GB total memory,
and one 40G NIC (Intel XL710). The server OS is Ubuntu 16.04.3 LTS. Ten
storage servers run Redis v4.0.6~\cite{redis} as the storage backend; two
generate client load using a DPDK-based workload generator.

By default, we use three replicas and a uniform workload on one million objects
with 5\% write ratio. The 5\% write ratio is similar to that in many
real-world storage systems~\cite{memcache-nsdi13, workload-fb-sigmetrics12}, previous
studies~\cite{incbricks-asplos17}, and standard benchmarks like
YCSB~\cite{ycsb}. We vary the parameters in the experiments to evaluate their
impacts.

\para{Comparison.} Redis is a widely-used open-source in-memory storage system.
However, Redis does not provide native support for replication, only a
cluster mode with weak consistency. We use a shim layer to implement several representative replication
protocols, including primary-backup (PB)~\cite{primary-backup},
chain replication (CR)~\cite{chain-replication},
CRAQ~\cite{craq} (a version of chain replication that makes reads more
scalable at the cost of more expensive writes), Viewstamped
Replication (VR)~\cite{viewstamped} and NOPaxos~\cite{nopaxos}. As described
in \S\ref{sec:implementation}, we run eight Redis
processes on each server to maximize per-server throughput. The shim layer
batches requests to Redis; the baseline (unreplicated) performance for one
server is 0.92 MQPS for reads and 0.8 MQPS for writes.

We compare system performance with and without \sysname for each
protocol. Due to space constraints, we show the results of CR, which is a high-throughput variant of PB, in most
figures; \S\ref{sec:evaluation:generality} compares performance across
all protocols, demonstrating the generality.

\begin{figure}[t]
    \centering
    \subfigure[Read vs. write throughput.]{
        \label{fig:eval_read_write}
        \includegraphics[width=0.48\linewidth]{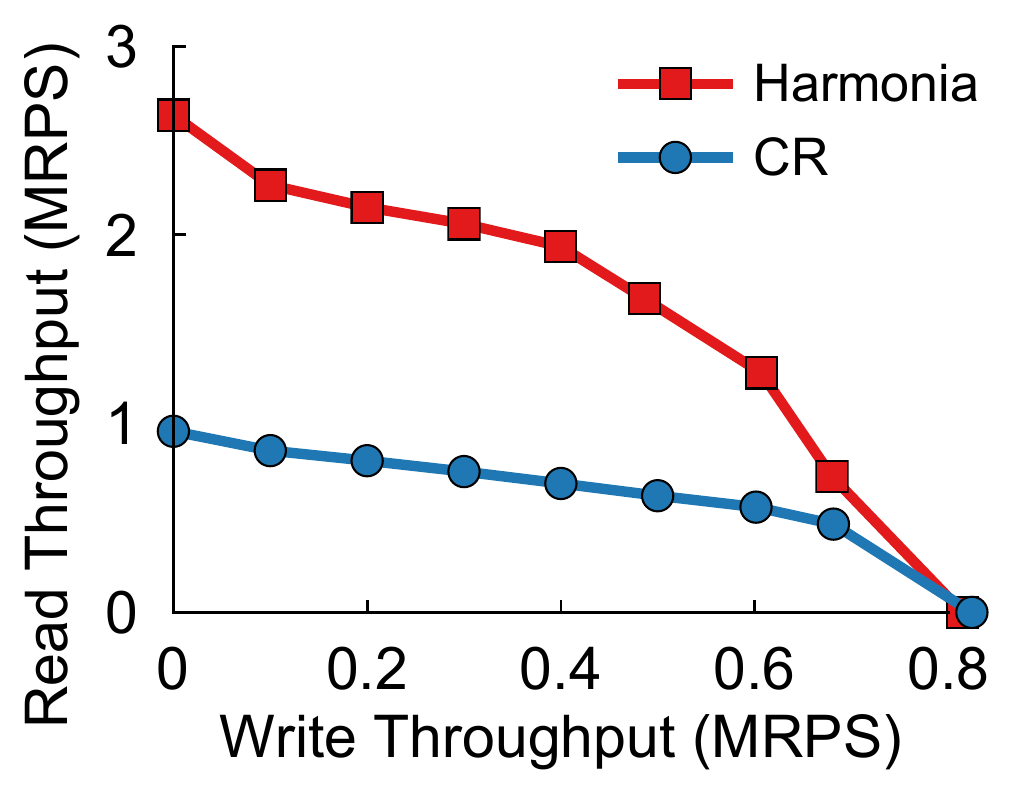}}
    \subfigure[Throughput vs. write ratio.]{
        \label{fig:eval_throughput_ratio}
        \includegraphics[width=0.48\linewidth]{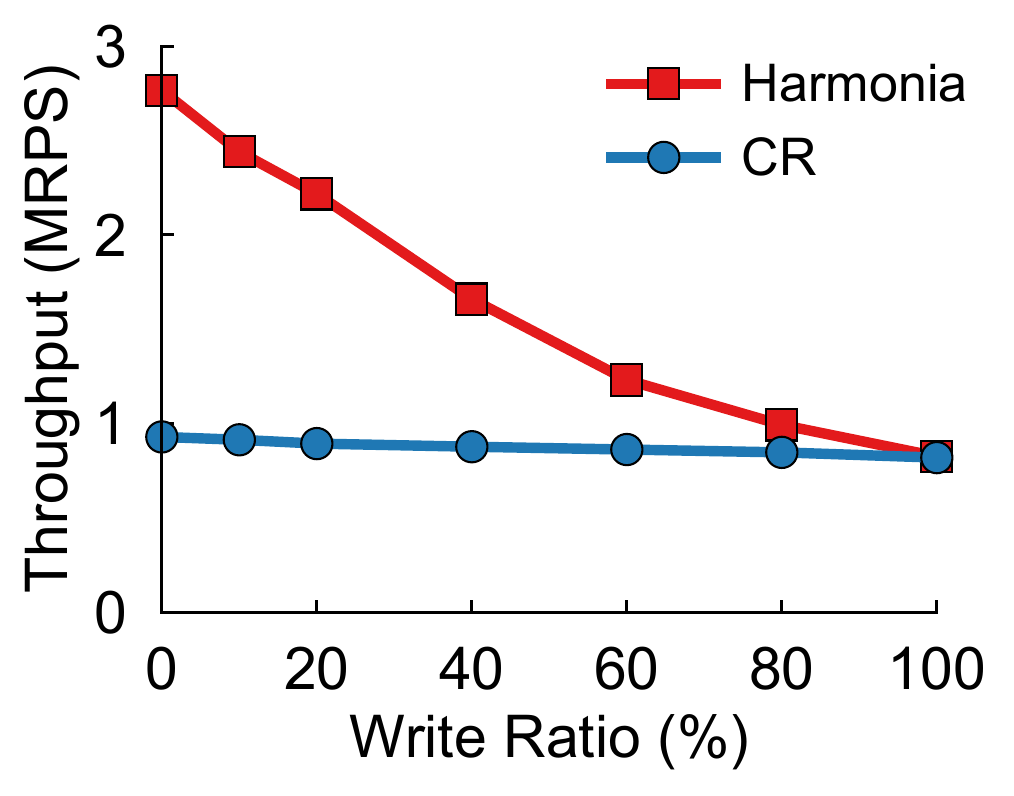}}
    \vspace{-0.1in}
    \caption{Throughput for mixed read-write workloads. (a) Read throughput as the write
    rate increases. (b) Total throughput under different write ratios.}
    \label{fig:eval_mixed}
\end{figure}

\begin{figure*}[t]
    \centering
    \subfigure[Read-only workload.]{
        \label{fig:eval_scalability_read}
        \includegraphics[width=0.32\linewidth]{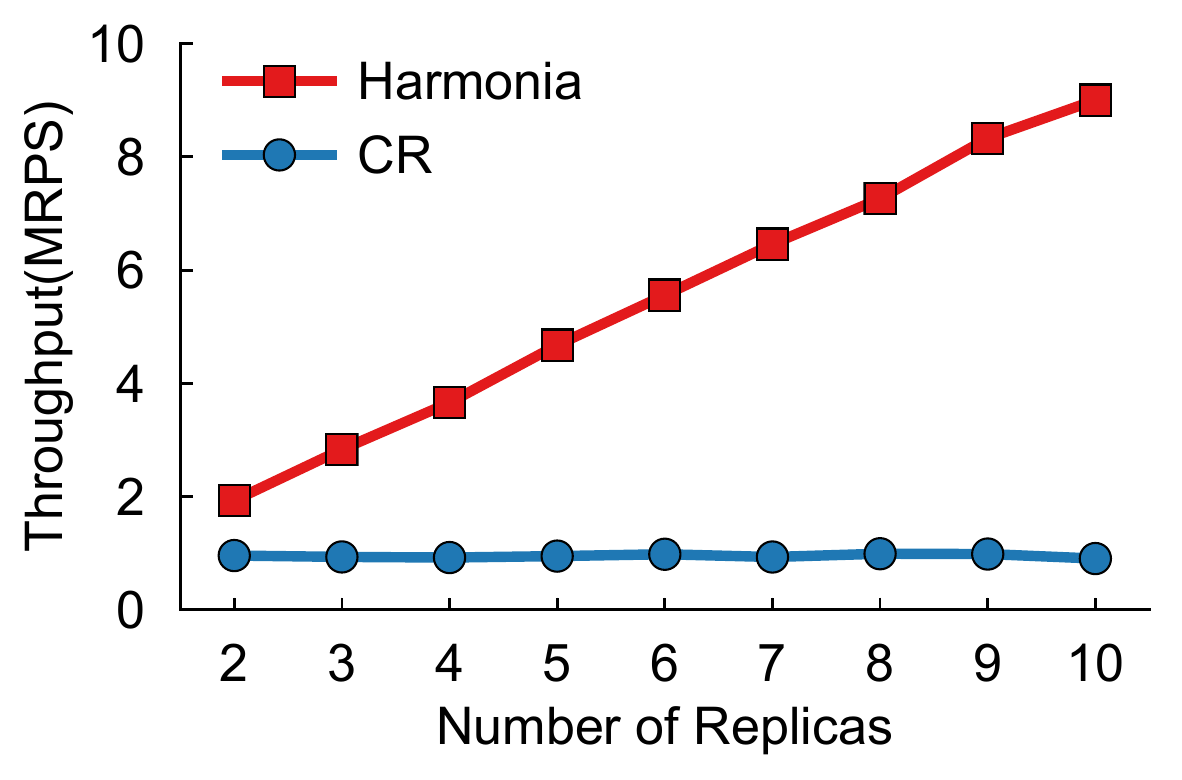}}
    \subfigure[Write-only workload.]{
        \label{fig:eval_scalability_write}
        \includegraphics[width=0.32\linewidth]{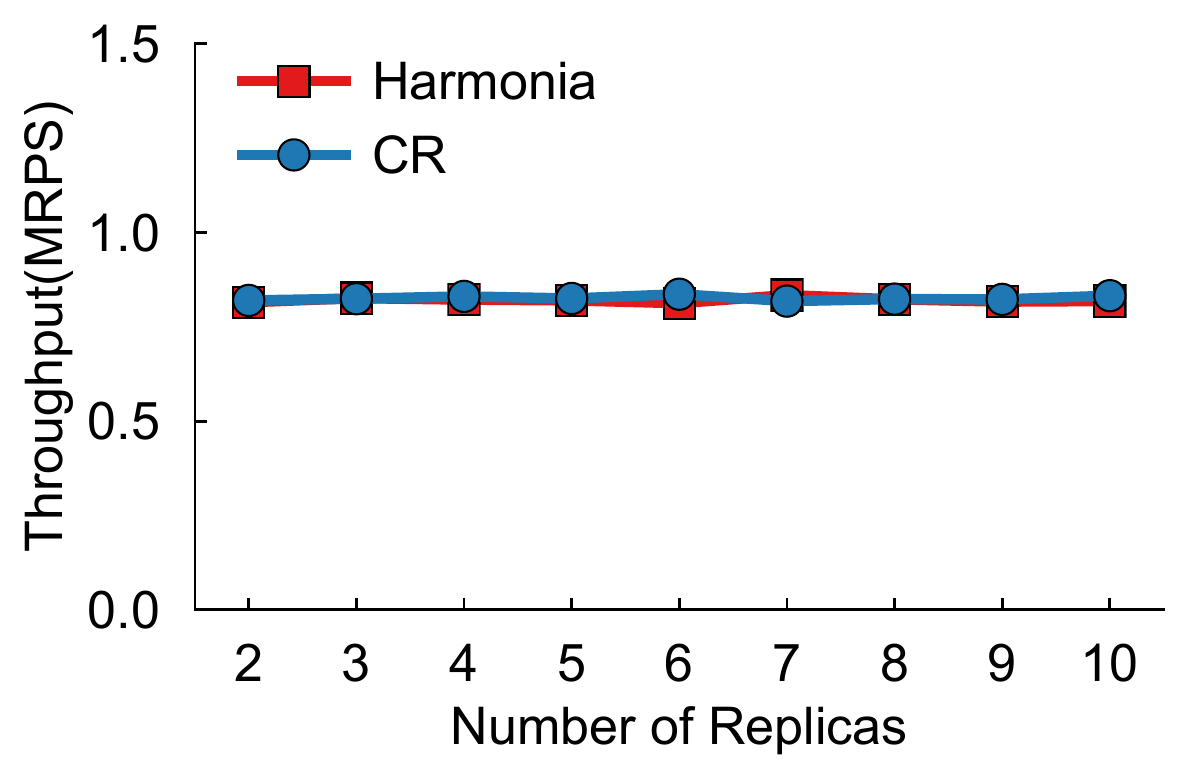}}
    \subfigure[Mixed workload with 5\% writes.]{
        \label{fig:eval_scalability_mixed}
        \includegraphics[width=0.32\linewidth]{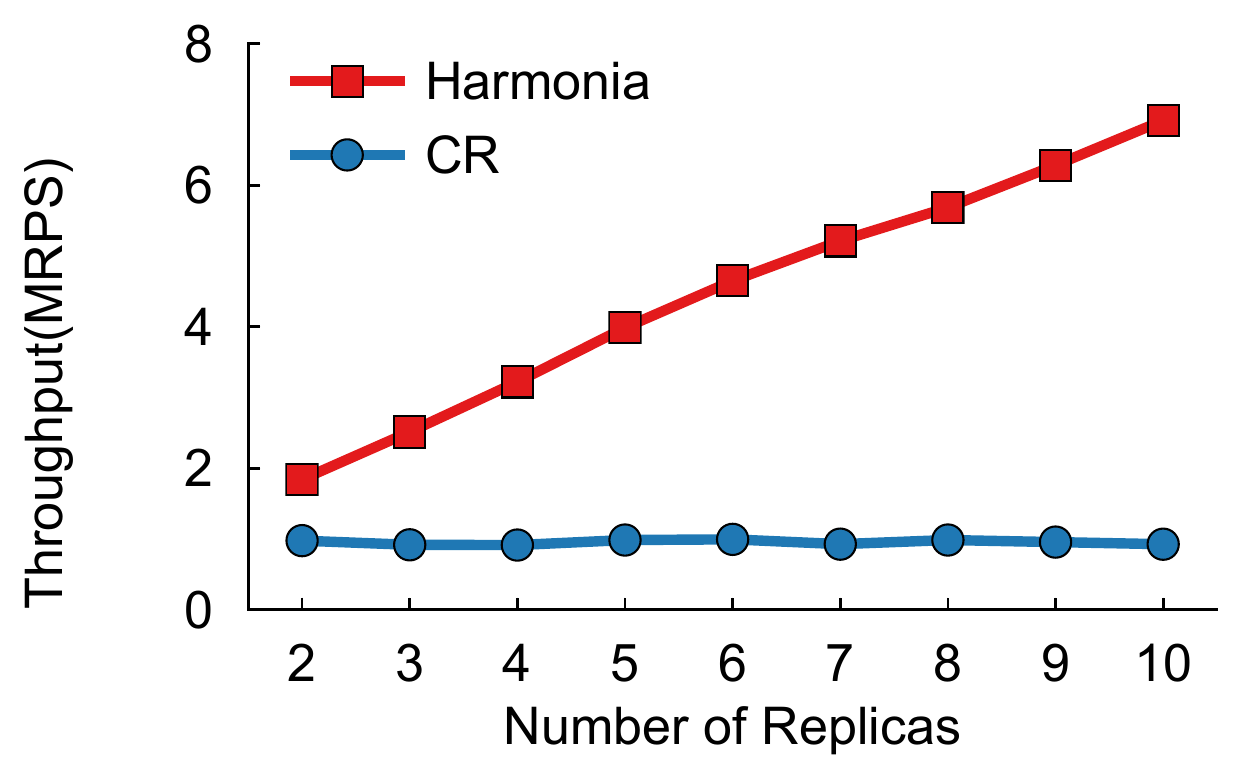}}
    \vspace{-0.05in}
    \caption{Total throughput with increasing numbers of replicas for three workloads. \sysname scales out with the number of replicas in read-only
    and read-intensive workloads.
    }
    \label{fig:eval_scalability}
\end{figure*}

\subsection{Latency vs. Throughput}
\label{sec:evaluation:performance}

We first conduct a basic throughput and latency experiment. The client generates
requests to three replicas, and measures the average latency at different
throughput levels. We consider read-only, write-only, and mixed workloads.

Figure~\ref{fig:eval_read_latency} shows the relationship between
throughput and latency under a read-only workload. Although we have
three replicas, since CR only uses the tail node to handle read
requests, the throughput is bounded by that of one server. In
comparison, since \sysname uses the switch to detect read-write
conflicts, it is able to fully utilize the capacity of all the three
replicas when there are no conflicts. The read latency is a few hundred microseconds
at low load, and increases as throughput goes up.
For write-only workloads (Figure~\ref{fig:eval_write_latency}), CR and
\sysname have identical performance, as \sysname simply passes writes
to the normal protocol.

To evaluate mixed workloads, the client fixes its rate of generating write requests,
and measures the maximum read throughput that can be handled by the replicas.
Figure~\ref{fig:eval_read_write} shows the read throughput as a function of write rate.
Since CR can only leverage the
capacity of tail node, its read throughput is no more than that of one
storage server, even when the write throughput is small. On the other hand,
\sysname can utilize almost all the three replicas to handle reads when the
write throughput is small. At low write rate, \sysname improves the throughput
by 3$\times$ over CR. At high write rate, both systems have similar throughput
as \sysname and CR process write requests in the same way.

Figure~\ref{fig:eval_throughput_ratio} evaluates the system performance for mixed
workloads from another angle. The client fixes the ratio of writes and measure
the saturated system throughput. The figure shows the total throughput as a function
of write ratio.
Similarly, the throughput of CR is bounded by the tail node,
while \sysname can leverage all replicas to process reads.
Similar to Figure~\ref{fig:eval_read_write}, when the write ratio is
high, \sysname has little benefit
as they process writes in the same way.

\subsection{Scalability}
\label{sec:evaluation:scalability}

\sysname offers near-linear read scalability for read-intensive workloads. We
demonstrate this by varying the number of replicas and measuring system
throughput in several representative cases. The scale is limited by the size of
our twelve-server testbed: we can use up to ten servers as replicas, and
two servers as clients to generate requests. Our high-performance
client implementation written in C and DPDK is able to saturate ten
replicas with two client servers.

\sysname offers dramatic improvements on read-only workloads
(Figure~\ref{fig:eval_scalability_read}). For CR, increasing the
number of replicas does not change the overall throughput, because it
only uses the tail to handle reads. In contrast, \sysname is
able to utilize the other replicas to serve reads, causing throughput
to increase linearly with the number of replicas. \sysname improves the
throughput by 10$\times$ with a replication factor of 10, limited
by the testbed size. It can scale out until the switch is
saturated. Multiple switches can be used for multiple replica groups to further scale out (\S\ref{sec:switch:issue}).

\begin{figure}[t]
\centering
    \includegraphics[width=0.75\linewidth]{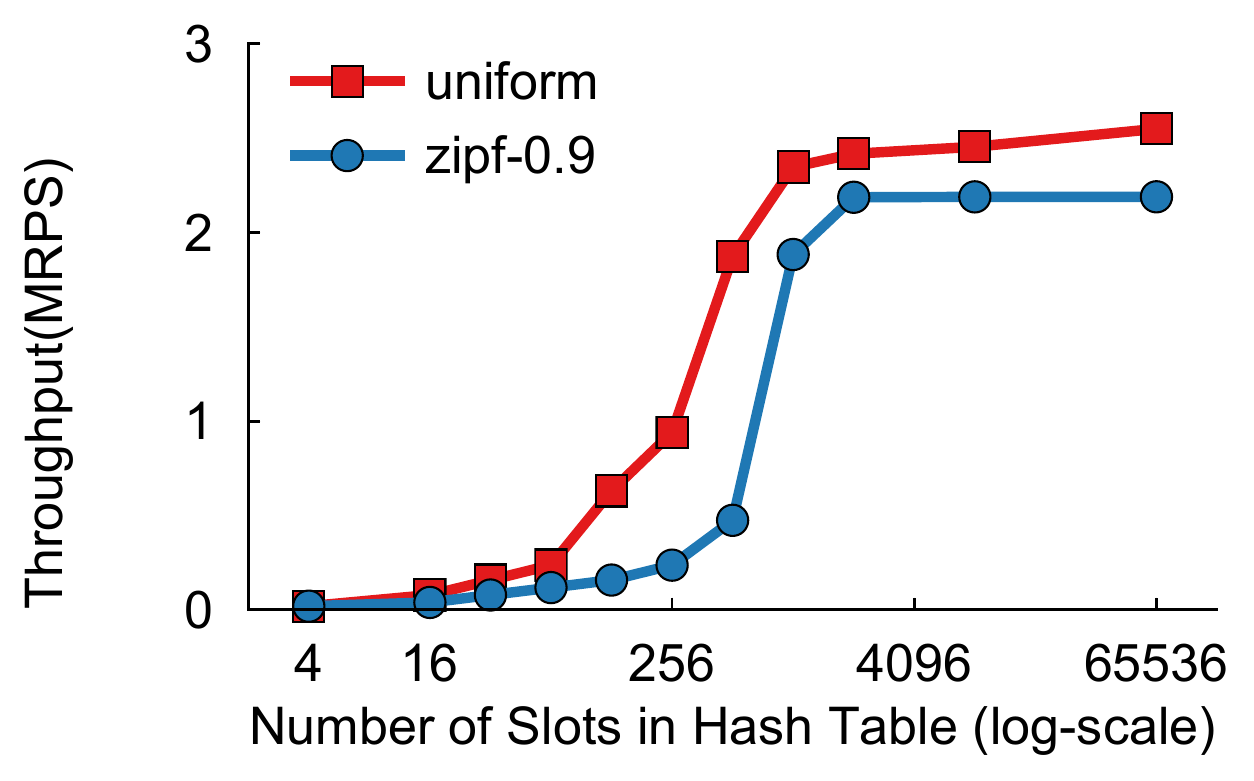}
\vspace{-0.05in}
\caption{Impact of switch memory. \sysname only consumes a small amount of
memory.}
\vspace{-0.1in}
\label{fig:eval_memory}
\end{figure}

\begin{figure*}[t]
    \centering
    \subfigure[Primary-backup protocols.]{
        \label{fig:eval_generality_pb}
        \includegraphics[width=0.45\linewidth]{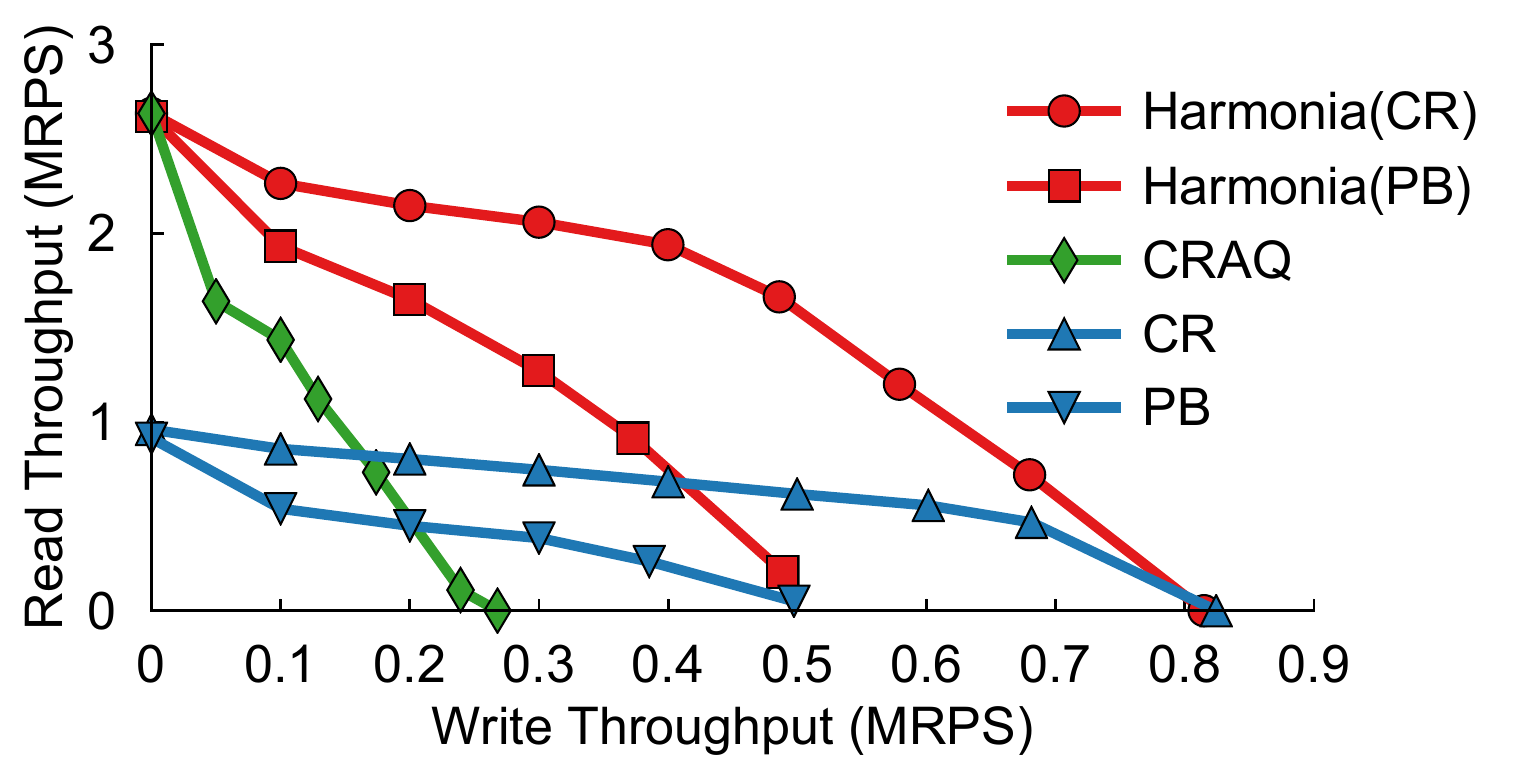}}
    \subfigure[Quorum-based protocols.]{
        \label{fig:eval_generality_quorum}
        \includegraphics[width=0.45\linewidth]{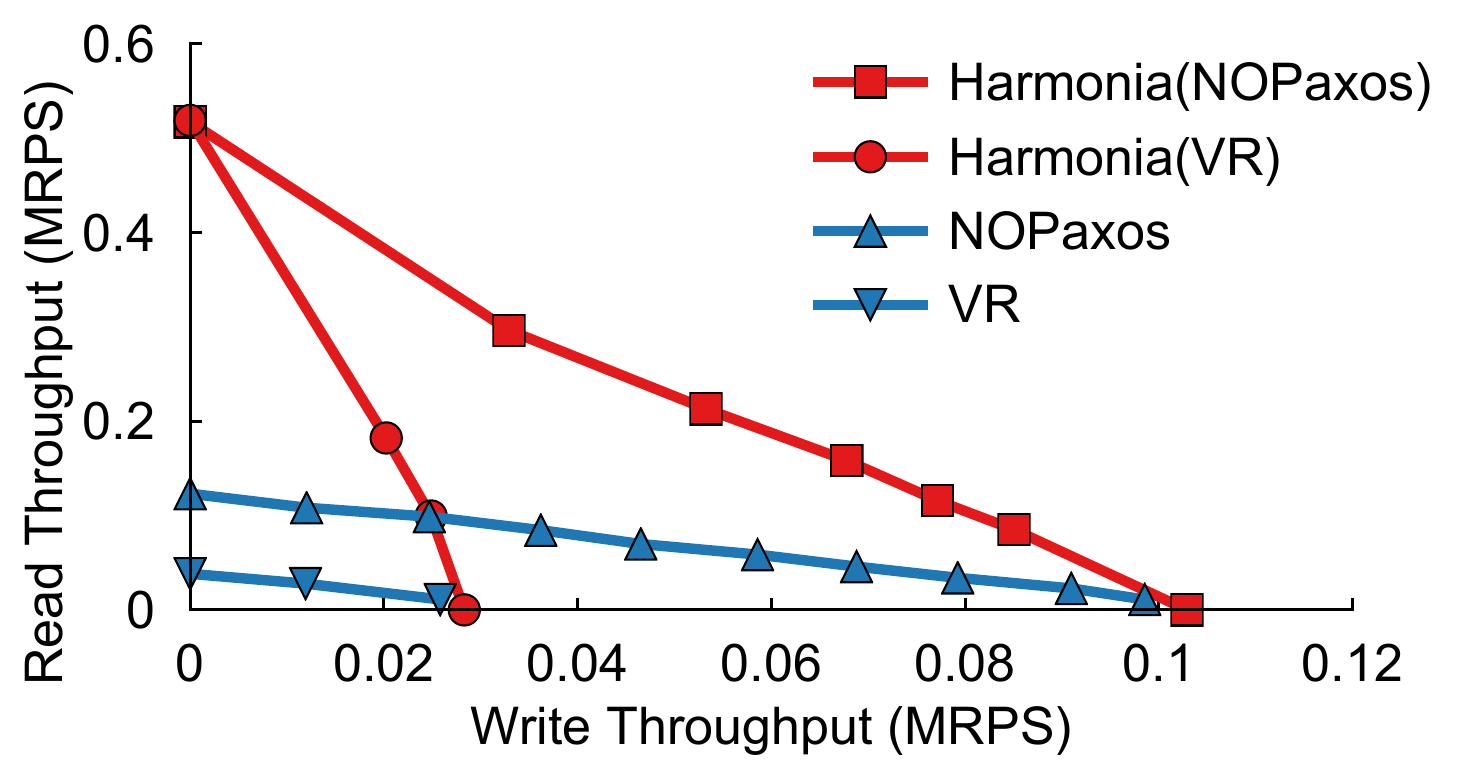}}
    \vspace{-0.1in}
    \caption{Read throughput as write rate increases, for a variety of replication
            protocols, with and without \sysname.}
    \vspace{-0.1in}
    \label{fig:eval_generality}
\end{figure*}

On write-only workloads (Figure~\ref{fig:eval_scalability_write}),
\sysname has no benefit regardless of the number of replicas used
because \sysname uses the underlying replication protocol for writes.
For CR, the throughput stays the same as more replicas are added since
CR uses a chain to propagate writes.

Figure~\ref{fig:eval_scalability_mixed} considers throughput
scalability under a mixed read-write workload with a write ratio of
5\%. Again, CR does not scale with the number of replicas. In
comparison, the throughput of \sysname increases nearly linearly with
the number of replicas. Under a read-intensive workload, \sysname can
efficiently utilize the remaining capacity on the other nodes. The
total throughput here is smaller than that for read-only requests
(Figure~\ref{fig:eval_scalability_read}), because handling writes is
more expensive than handling reads and the tail node becomes the bottleneck as the
number of replicas goes up to 8.

\subsection{Resource Usage}
\label{sec:evaluation:memory}

We now evaluate how much switch memory is needed to track the dirty
set. As we have discussed in \S\ref{sec:switch:resource}, \sysname
requires much less memory than other systems such as
NetCache~\cite{netcache} and NetChain~\cite{netchain} because \sysname
only needs to store metadata (i.e., object IDs and sequence numbers).
In this experiment, we vary the size of
\sysname switch's hash table, and measure the total throughput of
three replicas. Here, we use a write ratio of 5\% and both
uniform and skewed (zipf-0.9) request distributions across one million
keys. As shown in Figure~\ref{fig:eval_memory}, \sysname only requires
about 2000 hash table slots to track all outstanding writes
and reach maximum throughput. Before reaching the maximum, the
throughput of the uniform case increases faster than for the skewed
workload. This is because under the skewed workload, a hot object
would always occupy a slot in the hash table, making the switch drop
writes to other objects that collide on this slot, thus limiting throughput.

With 32-bit object IDs and 32-bit sequence numbers, 2000 slots only
consume 16KB memory. Given that commodity switches have tens of MB
on-chip memory~\cite{flexswitch, netcache, netchain}, the memory used by \sysname
only accounts for a tiny fraction of the total memory, e.g., only 1.6\% (0.8\%)
for 10MB (20MB) memory. This result roughly matches the back-of-envelop calculations
in \S\ref{sec:switch:resource}, with differences coming from table utilization, write duration and total throughput.
Thus, \sysname can be added to the switch and co-exist
with other modules without significant resource consumption. It also allows \sysname to scale out to multiple replica groups with one switch, as one group
only consumes little memory. This is especially important if \sysname
is deployed in a spine switch to support many replica groups across different racks.

\subsection{Generality}
\label{sec:evaluation:generality}

We show that \sysname is a general approach by applying it to a
variety of replication protocols.
For each replication protocol, we
examine throughput for a three-replica storage system with and without
\sysname. Figure~\ref{fig:eval_generality}
shows the read throughput as
a function of write rate for different protocols

Figure~\ref{fig:eval_generality_pb} shows the results for two
primary-backup protocols, PB and CR. Both PB and CR are
limited by the performance of one server. \sysname makes use of all
three replicas to handle reads, and provides significantly higher throughput
than PB and CR. CR is able to achieve higher write throughput than PB,
as it uses a chain structure to propagate writes.

CRAQ, a modified version of CR, obtains higher read
throughput than CR, as shown in Figure~\ref{fig:eval_generality_pb}. This is
because CRAQ allows reads to be sent to any replica (reads to dirty objects are
forwarded to the tail). However, CRAQ adds an additional phase to write
operations (first marking objects as dirty then committing the write). As a
result, CRAQ's write throughput is much lower---hence the steeper curve.
\sysname(CR), which applies in-network conflict detection to CR, performs much
better than CRAQ, achieving the same level of read scalability without degrading
the performance of writes.

Figure~\ref{fig:eval_generality_quorum} shows the results for quorum-based
protocols VR and NOPaxos. For faithful comparison, we use the original
implementation of NOPaxos, including the middlebox-based sequencer
prototype, which runs on a Cavium Octeon II network processor. We
integrate \sysname with these, rather than the Tofino switch and
Redis-based backend. As a result, the absolute numbers in
Figures~\ref{fig:eval_generality_pb} and
\ref{fig:eval_generality_quorum} are incomparable.
The trends, however, are the same. \sysname significantly improves
throughput for VR and NOPaxos.

Taken together, these results demonstrate that \sysname can be applied
broadly to a wide range of replication protocols. These experiments
show the advantage of in-network conflict detection, as it introduces no
performance penalties, unlike protocol-level optimizations such as CRAQ.

\subsection{Performance Under Failures}
\label{sec:evaluation:failure}

Finally, we show how \sysname handles failures. To simulate a failure, we first
manually stop and then reactivate the switch. \sysname uses the
mechanism described in \S\ref{sec:design:failure} to correctly recover
from the failure.

Figure~\ref{fig:eval_failure} shows the throughput during this period of failure
and recovery. At time 20 s, we let the \sysname switch stop forwarding any
packets, and the system throughput drops to zero. We wait for a few seconds and
then reactivate the switch to forward packets. Upon reactivation, the switch
retains none of its former state and uses a new switch ID. The servers are
notified with the new switch ID and agree to drop single-replica reads from the
old switch. In the beginning, the switch forwards reads to the tail node and writes
to the tail node. During this time, the system throughput is the same as without
\sysname. After the first \msg{write-completion} with the new switch ID passes
the  switch, the switch has the up-to-date dirty set and last-committed point.
At this time, the switch starts scheduling single-replica reads to the servers,
and the system
throughput is fully restored. Because the servers complete requests quickly, the
transition time is minimal, and we can see that system throughput returns to
pre-failure levels within a few seconds.

\begin{figure}[t]
\centering
    \includegraphics[width=0.9\linewidth]{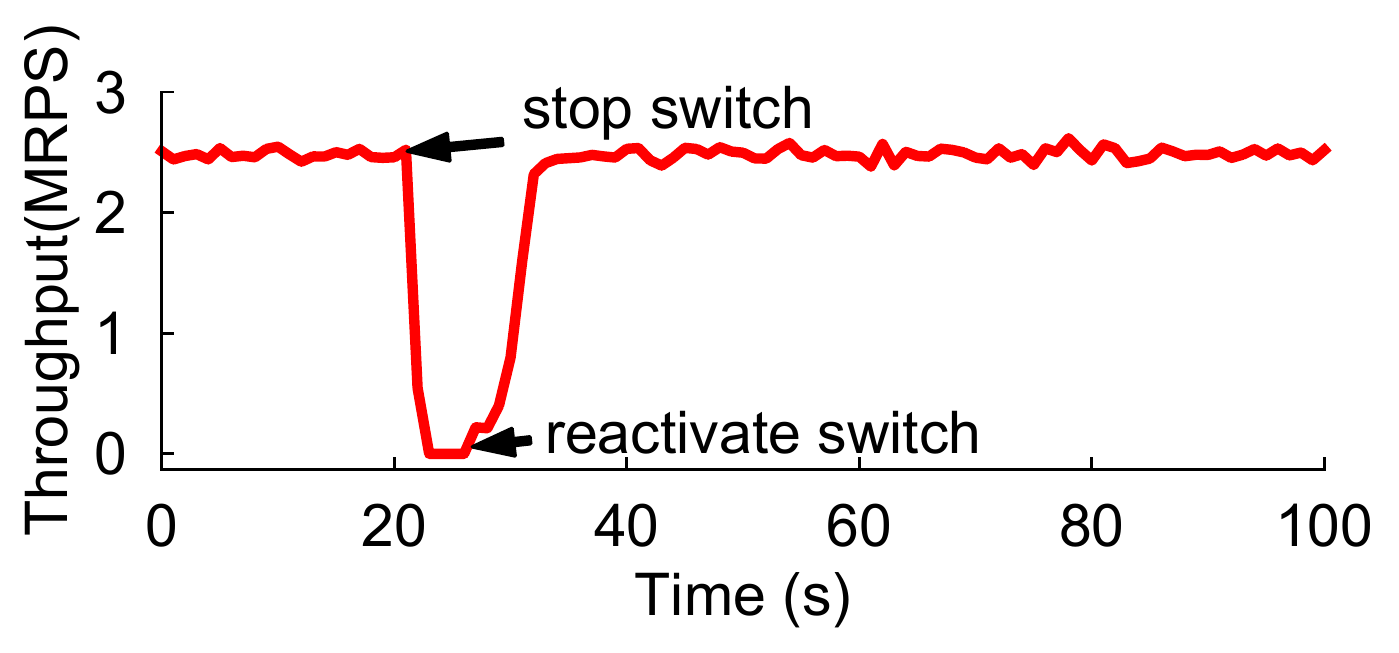}
\vspace{-0.1in}
\caption{Total throughput while the \sysname switch is stopped and then reactivated.}
\label{fig:eval_failure}
\end{figure}


\renewcommand{\secname}{Related Work}
\section{\secname}
\label{sec:related}

\para{Replication protocols.} Replication protocols are widely used by storage systems
to achieve strong consistency and fault tolerance. Dating back to
classic storage systems (e.g., Andrew~\cite{andrew}, Sprite~\cite{sprite}, Coda~\cite{coda}, Harp~\cite{harp},
RAID~\cite{raid}, Zebra~\cite{zebra}, and xFS~ \cite{xfs}), they are
now a mainstay of cloud storage services (e.g., GFS~\cite{gfs}, BigTable~\cite{bigtable},
Dynamo~\cite{dynamo}, HDFS~\cite{apache-hdfs}, Ceph~\cite{ceph},
Haystack~\cite{haystack}, f4~\cite{f4}, and Windows Azure
Storage~\cite{azure-storage}).

The primary-backup protocol~\cite{primary-backup} and its variations
like chain replication~\cite{chain-replication} and CRAQ~\cite{craq}
assign replicas with different roles (e.g., primary node, head
node, and tail node), and require operations to be executed by the replicas in a
certain order. Quorum-based protocols, such as Paxos~\cite{paxos98},
ZAB~\cite{zab}, Raft~\cite{raft}, Viewstamped
Replication~\cite{viewstamped} and Virtual Synchrony~\cite{vsync},
only require an operation to be executed at a quorum, instead of all
replicas. While they do not distinguish the roles of replicas, they
often employ an optimization that first elects a leader and then uses
the leader to commit operations to other nodes, which is very similar
to the primary-backup protocol. Vertical Paxos~\cite{vertical-paxos}
proposes to incorporate these two classes of protocols into a single
framework, by dividing a replication protocol into two parts: one is a
steady state protocol like the primary-backup protocol that optimizes
for high performance, and the other is a reconfiguration protocol like
Paxos which handles system reconfigurations, e.g., electing a leader.

CRAQ~\cite{craq} is most similar in spirit to our work. It adapts
chain replication to allow any replica to answer reads for
uncontended objects by adding a second phase to the write protocol:
objects are first marked dirty, then updated. \sysname achieves the
same goal without the write overhead by moving the contention
detection into the network, and also supports more general replication protocols.

\vspace{0.05in}
\para{Query scheduling.} A related approach is taken in a line of database
replication systems that achieve consistent transaction processing
atop multiple databases, such as
C-JDBC~\cite{cecchet04:_c_jdbc}, FAS~\cite{roehm02:_fas}, and
Ganymed~\cite{plattner04:_ganym}. These systems use a query scheduler
to orchestrate queries among replicas with different states.
The necessary logic is more complex for database transactions (and
sometimes necessitates weaker isolation levels). \sysname provides a
near-zero-overhead scheduler implementation for replication
using the network.


\vspace{0.05in}
\para{In-network computing.} The emerging programmable switches introduce new
opportunities to move computation into the network. NetCache~\cite{netcache} and
IncBricks~\cite{incbricks-asplos17} introduces in-network caching for key-value
stores. NetChain~\cite{netchain} builds a strongly-consistency, fault-tolerant,
in-network key-value store for coordination services. These designs
store object data in the switch data plane; \sysname consciously avoids
this in order to be more resource efficient. SwitchKV~\cite{switchkv}
leverages programmable switches to realize content-based routing for load
balancing in key-value stores. Eris~\cite{eris} exploits programmable switches to
realize concurrency control for distributed transactions.
NetPaxos~\cite{netpaxos, netpaxos-ccr} implements Paxos on switches.
SpecPaxos~\cite{specpaxos} and NOPaxos~\cite{nopaxos} use switches to order
messages to improve replication protocols. With NetPaxos,
SpecPaxos and NOPaxos, reads still need to be executed by a quorum, or
by a leader if the leader-based optimization is used.
\sysname improves these solutions by allowing reads not in the dirty set
to be executed by any replica.

\vspace{0.07in}


\renewcommand{\secname}{Conclusion}
\section{\secname}
\label{sec:conclusion}

In conclusion, we present \sysname, a new replicated storage architecture that achieves near-linear
scalability and guarantees linearizability with in-network conflict detection. \sysname leverages new-generation
programmable switches to efficiently track the dirty set and detect read-write conflicts
in the network data plane with no performance overhead.
Such a powerful capability enables
\sysname to safely schedule reads to the replicas without
sacrificing consistency. \sysname demonstrates that rethinking the
division of labor between the network and end hosts makes it possible
to achieve performance properties beyond the grasp of
distributed systems alone.

\para{Ethics.} This work does not raise any ethical issues.

\clearpage
{
\bibliographystyle{abbrv}
\balance
\bibliography{xin}}

\newpage
\begin{appendix}

\section{Proof of Correctness}
\label{app:proof}

\paraf{Notation.} Let $Q$ be a request, $Q.commit$ be the last-committed sequence
number added to the request by the switch, $R$ be a replica, $R.obj$ be the
local copy of an object at the replica, $R.obj.seq$ be the sequence number of
the most recent update of the object at the replica, and $R.seq$ be the sequence
number of the most recent write executed by the replica to any object.

\begin{theorem}
\sysname preserves linearizability of the replication protocols.
\label{thm:linearizability}
\end{theorem}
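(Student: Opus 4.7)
The plan is to prove Theorem~\ref{thm:linearizability} by first reducing the claim to verifying only the two properties P1 (Visibility) and P2 (Integrity) for fast-path single-replica reads. Since the underlying replication protocol is assumed linearizable and writes are routed through that protocol in sequence-number order, every write can be assigned a position in the linearization corresponding to its commit point. Normal-path reads trivially inherit the linearizability of the underlying protocol. Therefore it suffices to show that each fast-path read $Q$ can be placed in the linearization after every write that completed before $Q$ began (P1) and before every write whose commit occurred after $Q$ returned (P2). This reduction lets me treat the two protocol families separately using a uniform argument schema.

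For read-ahead protocols (primary-backup, chain replication), the plan is to establish P1 directly from the modified completion rule: the protocol only issues a \msg{write-completion} to the switch after all replicas have applied the write, so any subsequent fast-path read reaching any replica $R$ must observe the write on $R.obj$. For P2, I would argue that at the moment the switch stamps $Q.commit$, every write with sequence number at most $Q.commit$ is committed (by definition of the last-committed point, combined with the invariant that the dirty set held no entry for the object). Then the replica-side check $Q.commit \ge R.obj.seq$ guarantees that any update visible on $R.obj$ had already committed before $Q$ was stamped; if the check fails, the read is forwarded and handled through the normal protocol, which is linearizable by assumption.

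For read-behind protocols (VR/Multi-Paxos, NOPaxos), P2 holds trivially because replicas execute only committed operations, so every fast-path reply reflects a committed state. The work is in P1. Here I would exploit the additional commit-ack phase (or NOPaxos synchronization): \msg{write-completion} with sequence number $n$ is sent only after a quorum of replicas has executed every operation with sequence number $\le n$. Consequently, at any replica $R$ with $R.seq \ge Q.commit$, all writes up to $Q.commit$ have been executed locally, including every write that finished before $Q$ began (since those have completion notifications $\le$ the last-committed point carried by $Q$). Reads failing the $Q.commit \le R.seq$ check are bounced to the leader and handled by the normal protocol. A short argument then assembles a valid linearization by placing each fast-path read immediately after the write corresponding to its $Q.commit$.

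The main obstacle will be reasoning precisely about asynchrony between the switch's view and each replica's view: a fast-path read may arrive at a replica long after the switch released it, potentially reordered with writes to the same object. The delicate invariant is that the dirty-set and last-committed-point state the switch observed \emph{at stamping time} is enough, together with the replica-local check, to rule out any execution that would violate P1 or P2, regardless of subsequent network behavior. Establishing this invariant formally, especially across switch failover where sequence numbers carry switch IDs and the lease mechanism excludes stale single-replica reads, is the technically demanding step; I would formalize it as a monotone property preserved by every event in Algorithm~\ref{alg:switch} and the adapted replication protocols, and then combine the per-family arguments above to conclude linearizability.
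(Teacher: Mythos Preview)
Your proposal is correct and follows essentially the same approach as the paper's proof: reduce to P1/P2 for fast-path reads, handle the two protocol families separately via the replica-side checks $Q.commit \ge R.obj.seq$ (read-ahead) and $Q.commit \le R.seq$ (read-behind), and treat switch failover via the switch-ID ordering and lease mechanism. The paper's argument is slightly more explicit that the key switch-side invariant for read-behind P1 is the dirty-set property (if $obj$ is absent from the dirty set then every write to $obj$ has sequence number at most the last-committed point), whereas you lean on the commit-ack quorum phase; the quorum phase is really a performance device, and the correctness step you need is the dirty-set invariant you identify in your final paragraph, so you should make that dependency explicit rather than deriving it from the quorum.
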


\begin{proof}

We prove \sysname preserves linearizability of the replication
protocols under both normal and failure scenarios. We use a fail-stop
model. All write operations are processed by the replication protocol,
and are processed in sequence number order. We need only, then,
consider the read operations. The following two properties are
sufficient for linearizability.

\begin{itemize}[leftmargin=*]
\item \textbf{P1. Visibility.} A read operation sees the effects of all
  write operations that finished before it started.
\item \textbf{P2. Integrity.} A read operation will not see the
  effects of any write operation that had not committed at the time
  the read finished.
\end{itemize}

\para{Normal scenario.} \sysname uses the dirty set to detect potential
conflicts, and if no conflicts are detected, reads are scheduled to a random
replica for better performance. \sysname leverages the last-committed sequence number to
guarantee linearizability.

For read-ahead protocols including primary-backup and chain replication, writes are committed to all replicas.
Therefore, P1 is satisfied.
However, we must also verify that a read will not see the effect of an uncommitted write.
Consider a single-replica read $Q$ that arrives at a replica $R$ to retrieve object $obj$.
It may be the case that $R$ has applied uncommitted writes to $obj$.
Therefore, $R$ compares $Q.commit$ with $R.obj.seq$.
If $Q.commit < R.obj.seq$, then $R$ forwards $Q$ for handling by the normal protocol, which will return a consistent result.
However, $Q.commit \geq R.obj.seq$, it implies that the latest write to $R.obj$ has already been committed, since writes are applied by the replication protocol in sequence number order.
Therefore, P2 is satisfied.

For read-behind protocols including Viewstamped Replication and NOPaxos, the replicas first append writes to a local log.
They execute and apply writes only after they have been committed.
As such, reading from local state will only ever reflect the results of committed writes, and P2 is satisfied.
However, we must also verify that a read will see the effect of all committed writes.
Again, consider a single-replica read $Q$ that arrives at a replica $R$ to retrieve object $obj$.
It may be the case that there were writes to $obj$ committed before $Q$ was sent which, nevertheless, $R$ has not yet executed.
Therefore, $R$ compares $Q.commit$ with $R.seq$.
If $Q.commit > R.seq$, then $R$ forwards $Q$ for handling by the normal protocol, which will return a consistent result.
However, if $Q.commit \leq R.seq$, this implies that $R$ has executed all writes to $obj$ which were committed at the time $Q$ was forwarded by the \sysname switch; otherwise the last committed sequence number on the switch would have been larger or the dirty set would have contained $obj$.
Therefore, P1 is satisfied.

\para{Switch failure.}
First, notice that in the above argument, we relied on two key facts about the state held at the switch \emph{at the time it forwards a single-replica read that will be served by a replica}.
The first is that the dirty set contains all objects with outstanding, uncommitted writes.
The second is that for all writes committed by the replication protocol, either the sequence number of that write is less than or equal to the last committed sequence number on the switch or the object being written to is in the switch's dirty set.
Now, we want to show that these two facts still hold when there are multiple \sysname switches, each receiving reads and writes.

In order for a switch to be able to forward single-replica reads at all, the switch first must have received a single \msg{write-completion} message with its switch ID.
Since sequence numbers are ordered lexicographically using the switch ID first and writes are applied in order by the replication protocol, the switch's dirty set must contain all uncommitted writes with switch ID less than or equal to its own, and all committed writes with sequence numbers with switch ID less than or equal to the switch's must either have matching entries in the dirty set or be less than or equal to the last committed sequence number.
Furthermore, if the switch forwards a single-replica read that will actually be served by a replica, that must mean that at the time it forwarded that read, no switch with larger switch ID could have yet sent any writes.
Otherwise, the replicas would have already agreed to permanently disallow single-replica reads from the switch in question.
Therefore, the switch's state, at the time it forwarded the single-replica read, was suitably up-to-date, and both of our key properties held.



\para{Server failure.} A server failure is handled by the replication protocol.
However, before a failed server is removed from the system, the protocol must ensure that the failed server is first removed from the current switch's routing information.
As long as this requirement is met, then all servers receiving single-replica reads can return linearizable results.
\end{proof}

\clearpage
\onecolumn
\section{Harmonia Specification}
\label{app:tla}
\definecolor{boxshade}{gray}{0.85}
\setlength{\textwidth}{360pt}
\setlength{\textheight}{541pt}

\tlatex
\setboolean{shading}{true}
\@x{}\moduleLeftDash\@xx{ {\MODULE} harmonia}\moduleRightDash\@xx{}%
\begin{lcom}{5.0}%
\begin{cpar}{0}{F}{F}{0}{0}{}%
Specifies the \ensuremath{Harmonia} protocol.
\end{cpar}%
\end{lcom}%
\@x{ {\EXTENDS} Naturals ,\, FiniteSets ,\, Sequences ,\, TLC}%
\@pvspace{8.0pt}%
\@x{}\midbar\@xx{}%
\@x{}%
\@y{\@s{0}%
   \textbf{\large Constants and Variables}
}%
\@xx{}%
\@pvspace{8.0pt}%
\@x{ {\CONSTANTS} dataItems ,\,\@s{13.52}}%
\@y{\@s{0}%
 Set of model values representing data items
}%
\@xx{}%
\@x{\@s{54.75} numSwitches ,\,}%
\@y{\@s{0}%
 Number of total switches
}%
\@xx{}%
\@x{\@s{54.75} replicas ,\,\@s{23.89}}%
\@y{\@s{0}%
 Set of model values representing the replicas
}%
\@xx{}%
\@x{\@s{54.75} isReadBehind\@s{3.32}}%
\@y{\@s{0}%
 Whether the replication protocol is read-behind
}%
\@xx{}%
\@pvspace{8.0pt}%
\@x{ {\ASSUME} \.{\land} IsFiniteSet ( dataItems )}%
\@x{\@s{38.24} \.{\land} IsFiniteSet ( replicas )}%
\@x{\@s{38.24} \.{\land} numSwitches \.{>} 0}%
\@x{\@s{38.24} \.{\land} isReadBehind \.{\in} \{ {\TRUE} ,\, {\FALSE} \}}%
\@pvspace{8.0pt}%
\@x{ isReadAhead \.{\defeq} {\lnot} isReadBehind}%
\@pvspace{8.0pt}%
\@x{ {\VARIABLE} messages ,\,\@s{45.77}}%
\@y{\@s{0}%
 The network, a set of all messages sent
}%
\@xx{}%
\@x{\@s{46.84} switchStates ,\,\@s{32.23}}%
\@y{\@s{0}%
 The state of the \ensuremath{Harmonia} switches
}%
\@xx{}%
\@x{\@s{46.84} activeSwitch ,\,\@s{31.52}}%
\@y{\@s{0}%
 The switch allowed to send \ensuremath{Harmonia} reads
}%
\@xx{}%
\@x{\@s{46.84} sharedLog ,\,\@s{42.89}}%
\@y{\@s{0}%
 The main \ensuremath{log} decided on by replication protocol
}%
\@xx{}%
\@x{\@s{46.84} replicaCommitPoints}%
\@y{\@s{0}%
 The latest write processed by each replica
}%
\@xx{}%
\@pvspace{8.0pt}%
\@x{}%
\@y{\@s{0}%
 A value smaller than all writes sent by switches
}%
\@xx{}%
 \@x{ BottomWrite \.{\defeq} [ switchNum \.{\mapsto} 0 ,\, seq \.{\mapsto} 0
 ]}%
\@pvspace{8.0pt}%
\begin{lcom}{5.0}%
\begin{cpar}{0}{F}{F}{0}{0}{}%
  \textbf{Message Schemas}
\end{cpar}%
\vshade{5.0}%
\begin{cpar}{0}{F}{F}{0}{0}{}%
Write (Switch to Replication Protocol)
\end{cpar}%
\begin{cpar}{0}{T}{F}{10.0}{0}{}%
[ \ensuremath{mtype\@s{12.5}\.{\mapsto} MWrite},
\end{cpar}%
\begin{cpar}{0}{T}{F}{5.0}{0}{}%
 \ensuremath{switchNum \.{\mapsto} i \.{\in}}
 (\ensuremath{1\.{\dotdot}numSwitches}),
\end{cpar}%
\begin{cpar}{0}{F}{F}{0}{0}{}%
\ensuremath{seq\@s{17.5}\.{\mapsto} i \.{\in}} (\ensuremath{1\.{\dotdot}}),
\end{cpar}%
\begin{cpar}{0}{F}{F}{0}{0}{}%
\ensuremath{dataItem \.{\mapsto} d \.{\in} dataItems} ]
\end{cpar}%
\vshade{5.0}%
\begin{cpar}{2}{F}{F}{0}{0}{}%
 The \ensuremath{ProtocolRead}, \ensuremath{HarmoniaRead}, and
 \ensuremath{ReadResponse} messages contain a field
 (\ensuremath{ghostLastReponse}) which is not used in the protocol, and is
 only
 present to aid in the definition of linearizability.
\end{cpar}%
\vshade{5.0}%
\begin{cpar}{0}{F}{F}{0}{0}{}%
\ensuremath{ProtocolRead} (Switch to Replication Protocol)
\end{cpar}%
\begin{cpar}{0}{T}{F}{10.0}{0}{}%
[ \ensuremath{mtype\@s{30.0}\.{\mapsto} MProtocolRead},
\end{cpar}%
\begin{cpar}{0}{T}{F}{5.0}{0}{}%
\ensuremath{dataItem\@s{22.5}\.{\mapsto} d \.{\in} dataItems},
\end{cpar}%
\begin{cpar}{0}{F}{F}{0}{0}{}%
 \ensuremath{ghostLastReponse \.{\mapsto} w \.{\in} WRITES} (the set of all
 \ensuremath{MWrite} messages) ]
\end{cpar}%
\vshade{5.0}%
\begin{cpar}{2}{F}{F}{0}{0}{}%
\ensuremath{HarmoniaRead} (Switch to Replicas)
\end{cpar}%
\begin{cpar}{0}{T}{F}{10.0}{0}{}%
[ \ensuremath{mtype\@s{30.0}\.{\mapsto} MHarmoniaRead},
\end{cpar}%
\begin{cpar}{0}{T}{F}{5.0}{0}{}%
\ensuremath{dataItem\@s{22.5}\.{\mapsto} d \.{\in} dataItems},
\end{cpar}%
\begin{cpar}{0}{F}{F}{0}{0}{}%
 \ensuremath{switchNum\@s{20.0}\.{\mapsto} i \.{\in}}
 (\ensuremath{1\.{\dotdot}numSwitches}),
\end{cpar}%
\begin{cpar}{0}{F}{F}{0}{0}{}%
\ensuremath{lastCommitted \.{\mapsto} w \.{\in} WRITES},
\end{cpar}%
\begin{cpar}{0}{F}{F}{0}{0}{}%
\ensuremath{ghostLastReponse \.{\mapsto} w \.{\in} WRITES} ]
\end{cpar}%
\vshade{5.0}%
\begin{cpar}{2}{F}{F}{0}{0}{}%
\ensuremath{ReadResponse} (Replicas/Replication Protocol to Client)
\end{cpar}%
\begin{cpar}{0}{T}{F}{10.0}{0}{}%
[ \ensuremath{mtype\@s{30.0}\.{\mapsto} MReadResponse},
\end{cpar}%
\begin{cpar}{0}{T}{F}{5.0}{0}{}%
\ensuremath{write\@s{30.0}\.{\mapsto} w \.{\in} WRITES},
\end{cpar}%
\begin{cpar}{0}{F}{F}{0}{0}{}%
\ensuremath{ghostLastReponse \.{\mapsto} w \.{\in} WRITES}, ]
\end{cpar}%
\end{lcom}%
\@x{ {\CONSTANTS} MWrite ,\,}%
\@x{\@s{54.75} MProtocolRead ,\,}%
\@x{\@s{54.75} MHarmoniaRead ,\,}%
\@x{\@s{54.75} MReadResponse}%
\@pvspace{8.0pt}%
\@x{ Init \.{\defeq} \.{\land} messages \.{=} \{ \}}%
 \@x{\@s{35.70} \.{\land} switchStates\@s{0.71} \.{=} [ i \.{\in} ( 1
 \.{\dotdot} numSwitches ) \.{\mapsto}}%
\@x{\@s{118.68} [ seq\@s{49.98} \.{\mapsto} 0 ,\,}%
 \@x{\@s{121.45} dirtySet\@s{28.57} \.{\mapsto} [ d \.{\in} \{ \} \.{\mapsto}
 0 ] ,\,}%
\@x{\@s{121.45} lastCommitted \.{\mapsto} BottomWrite ] ]}%
\@x{\@s{35.70} \.{\land} activeSwitch \.{=} 1}%
 \@x{\@s{35.70} \.{\land} replicaCommitPoints \.{=} [ r \.{\in} replicas
 \.{\mapsto} 0 ]}%
\@x{\@s{35.70} \.{\land} sharedLog \.{=} {\langle} {\rangle}}%
\@pvspace{8.0pt}%
\@x{}\midbar\@xx{}%
\@x{}%
\@y{\@s{0}%
   \textbf{\large Helper and Utility Functions}
}%
\@xx{}%
\@pvspace{8.0pt}%
\@x{}%
\@y{\@s{0}%
 Basic utility functions
}%
\@xx{}%
\@x{ Range ( t ) \.{\defeq} \{ t [ i ] \.{:} i \.{\in} {\DOMAIN} t \}}%
\@pvspace{8.0pt}%
 \@x{ Min ( S )\@s{1.49} \.{\defeq} {\CHOOSE} s \.{\in} S \.{:} \A\, sp
 \.{\in} S \.{:} sp \.{\geq} s}%
\@pvspace{8.0pt}%
 \@x{ Max ( S ) \.{\defeq} {\CHOOSE} s \.{\in} S \.{:} \A\, sp \.{\in} S \.{:}
 sp \.{\leq} s}%
\@pvspace{8.0pt}%
\@x{}%
\@y{\@s{0}%
 Sequence number functions
}%
\@xx{}%
 \@x{ GTE ( w1 ,\, w2 ) \.{\defeq} \.{\lor} w1 . switchNum \.{>} w2 .
 switchNum}%
\@x{\@s{81.09} \.{\lor} \.{\land} w1 . switchNum \.{=} w2 . switchNum}%
\@x{\@s{92.20} \.{\land} w1 . seq \.{\geq} w2 . seq}%
\@pvspace{8.0pt}%
 \@x{ GT ( w1 ,\, w2 ) \.{\defeq} \.{\lor} w1 . switchNum \.{>} w2 .
 switchNum}%
\@x{\@s{78.54} \.{\lor} \.{\land} w1 . switchNum \.{=} w2 . switchNum}%
\@x{\@s{89.65} \.{\land} w1 . seq \.{>} w2 . seq}%
\@pvspace{8.0pt}%
 \@x{ MinW ( W )\@s{1.06} \.{\defeq} {\CHOOSE} w \.{\in} W \.{:} \A\, wp
 \.{\in} W \.{:} GTE ( wp ,\, w )}%
\@pvspace{8.0pt}%
 \@x{ MaxW ( W ) \.{\defeq} {\CHOOSE} w \.{\in} W \.{:} \A\, wp \.{\in} W
 \.{:} GTE ( w ,\, wp )}%
\@pvspace{8.0pt}%
\@x{}%
\@y{\@s{0}%
 Common \ensuremath{log}-processing functions
}%
\@xx{}%
\@x{ CommittedLog \.{\defeq} {\IF} isReadBehind}%
\@x{\@s{82.26} \.{\THEN} sharedLog}%
 \@x{\@s{82.26} \.{\ELSE} SubSeq ( sharedLog ,\, 1 ,\, Min ( Range (
 replicaCommitPoints ) ) )}%
\@pvspace{8.0pt}%
\@x{ MaxCommittedWriteForIn ( d ,\, log ) \.{\defeq}}%
 \@x{\@s{8.2} MaxW ( \{ BottomWrite \} \.{\cup} \{ m \.{\in} Range ( log )
 \.{:} m . dataItem \.{=} d \} )}%
\@pvspace{8.0pt}%
 \@x{ MaxCommittedWriteFor ( d ) \.{\defeq} MaxCommittedWriteForIn ( d ,\,
 CommittedLog )}%
\@pvspace{8.0pt}%
 \@x{ MaxCommittedWrite \.{\defeq} MaxW ( \{ BottomWrite \} \.{\cup} Range (
 CommittedLog ) )}%
\@pvspace{8.0pt}%
\@x{}%
\@y{\@s{0}%
 Short-hand way of sending a message
}%
\@xx{}%
\@x{ Send ( m ) \.{\defeq} messages \.{'} \.{=} messages \.{\cup} \{ m \}}%
\@pvspace{8.0pt}%
\@x{}\midbar\@xx{}%
\@x{}%
\@y{\@s{0}%
   \textbf{\large Main Spec}
}%
\@xx{}%
\@pvspace{8.0pt}%
\begin{lcom}{5.0}%
\begin{cpar}{0}{F}{F}{0}{0}{}%
Using the ghost variables in reads and read responses, we can define our main
 safety property (linearizability) rather simply.
\end{cpar}%
\end{lcom}%
 \@x{ Linearizability \.{\defeq} \A\, m \.{\in} \{ mp \.{\in} messages \.{:}
 mp . mtype \.{=} MReadResponse \} \.{:}}%
\@x{\@s{81.88} \.{\land} GTE ( m . write ,\, m . ghostLastReponse )}%
\@x{\@s{81.88} \.{\land} \.{\lor} m . write \.{\in} Range ( CommittedLog )}%
\@x{\@s{92.99} \.{\lor} m . write \.{=} BottomWrite}%
\@pvspace{8.0pt}%
\@x{}\midbar\@xx{}%
\@x{}%
\@y{\@s{0}%
   \textbf{\large Actions and Message Handlers}
}%
\@xx{}%
\@pvspace{8.0pt}%
\@x{}%
\@y{\@s{0}%
 Switch \ensuremath{s} sends write for data item \ensuremath{d
}}%
\@xx{}%
\@x{ SendWrite ( s ,\, d ) \.{\defeq}}%
\@x{ \.{\LET}}%
\@x{\@s{8.2} nextSeq \.{\defeq} switchStates [ s ] . seq \.{+} 1}%
\@x{ \.{\IN}}%
\@x{\@s{8.2}}%
\@y{\@s{0}%
 Only activated switches can send writes
}%
\@xx{}%
\@x{\@s{8.2} \.{\land} s \.{\leq} activeSwitch}%
 \@x{\@s{8.2} \.{\land} switchStates \.{'} \.{=} [ switchStates {\EXCEPT}
 {\bang} [ s ] \.{=}}%
\@x{\@s{19.31} [ @ {\EXCEPT} {\bang} . seq \.{=} nextSeq ,\,}%
 \@x{\@s{71.88} {\bang} . dirtySet \.{=} ( d \.{\colongt} nextSeq ) \.{\,@@\,}
 @ ] ]}%
\@x{\@s{8.2} \.{\land} Send ( [ mtype\@s{22.27} \.{\mapsto} MWrite ,\,}%
\@x{\@s{47.96} switchNum \.{\mapsto} s ,\,}%
\@x{\@s{47.96} seq\@s{34.91} \.{\mapsto} nextSeq ,\,}%
\@x{\@s{47.96} dataItem\@s{9.19} \.{\mapsto} d ] )}%
 \@x{\@s{8.2} \.{\land} {\UNCHANGED} {\langle} activeSwitch ,\,
 replicaCommitPoints ,\, sharedLog {\rangle}}%
\@pvspace{8.0pt}%
\@x{}%
\@y{\@s{0}%
 Add write \ensuremath{w} to the shared \ensuremath{log
}}%
\@xx{}%
\@x{ HandleWrite ( w ) \.{\defeq}}%
\@x{\@s{8.2}}%
\@y{\@s{0}%
 The replication protocol adds writes in order
}%
\@xx{}%
\@x{\@s{8.2} \.{\land} \.{\lor} Len ( sharedLog ) \.{=} 0}%
\@x{\@s{19.31} \.{\lor} \.{\land} Len ( sharedLog ) \.{>} 0}%
\@x{\@s{30.42} \.{\land} GTE ( w ,\, sharedLog [ Len ( sharedLog ) ] )}%
\@x{\@s{8.2} \.{\land} sharedLog \.{'} \.{=} Append ( sharedLog ,\, w )}%
 \@x{\@s{8.2} \.{\land} {\UNCHANGED} {\langle} messages ,\, switchStates ,\,
 activeSwitch ,\, replicaCommitPoints {\rangle}}%
\@pvspace{8.0pt}%
\@x{}%
\@y{\@s{0}%
 The switch processes a write completion for write \ensuremath{w
}}%
\@xx{}%
\@x{ ProcessWriteCompletion ( w ) \.{\defeq}}%
\@x{ \.{\LET}}%
\@x{\@s{8.2} s \.{\defeq} w . switchNum}%
\@x{\@s{8.2} ds \.{\defeq} switchStates [ s ] . dirtySet}%
 \@x{\@s{8.2} dsp \.{\defeq} [ dp \.{\in} \{ d \.{\in} {\DOMAIN} ds \.{:} ds [
 d ] \.{>} w . seq \} \.{\mapsto} ds [ dp ] ]}%
\@x{ \.{\IN}}%
\@x{\@s{8.2}}%
\@y{\@s{0}%
 Write is committed (processed by all in read-ahead mode)
}%
\@xx{}%
\@x{\@s{8.2} \.{\land} GTE ( MaxCommittedWrite ,\, w )}%
 \@x{\@s{8.2} \.{\land} switchStates \.{'} \.{=} [ switchStates {\EXCEPT}
 {\bang} [ s ] \.{=}}%
\@x{\@s{19.31} [ @ {\EXCEPT} {\bang} . dirtySet \.{=} dsp ,\,}%
\@x{\@s{71.88} {\bang} . lastCommitted \.{=} MaxW ( \{ @ ,\, w \} ) ] ]}%
 \@x{\@s{8.2} \.{\land} {\UNCHANGED} {\langle} messages ,\, activeSwitch ,\,
 replicaCommitPoints ,\, sharedLog {\rangle}}%
\@pvspace{8.0pt}%
\@x{}%
\@y{\@s{0}%
 Replica \ensuremath{r} locally commits the next write from the shared
 \ensuremath{log
}}%
\@xx{}%
\@x{ CommitWrite ( r ) \.{\defeq}}%
\@x{\@s{8.2} \.{\land} Len ( sharedLog ) \.{>} replicaCommitPoints [ r ]}%
 \@x{\@s{8.2} \.{\land} replicaCommitPoints \.{'} \.{=} [ replicaCommitPoints
 {\EXCEPT} {\bang} [ r ] \.{=} @ \.{+} 1 ]}%
 \@x{\@s{8.2} \.{\land} {\UNCHANGED} {\langle} messages ,\, switchStates ,\,
 activeSwitch ,\, sharedLog {\rangle}}%
\@pvspace{8.0pt}%
\@x{}%
\@y{\@s{0}%
 Switch \ensuremath{s} sends read for data item \ensuremath{d
}}%
\@xx{}%
\@x{ SendRead ( s ,\, d ) \.{\defeq}}%
\@x{ \.{\LET}}%
 \@x{\@s{8.2} shouldSendHarmoniaRead \.{\defeq} \.{\land} d \.{\notin}
 {\DOMAIN} switchStates [ s ] . dirtySet}%
\@x{\@s{141.36}}%
\@y{\@s{0}%
 Can send \ensuremath{Harmonia} reads after one completion
}%
\@xx{}%
 \@x{\@s{141.36} \.{\land} GT ( switchStates [ s ] . lastCommitted ,\,
 BottomWrite )}%
\@x{\@s{8.2} returnedReads \.{\defeq} \{ m . write \.{:}}%
 \@x{\@s{16.4} m \.{\in} \{ mp \.{\in} messages \.{:} \.{\land} mp . mtype
 \.{=} MReadResponse}%
\@x{\@s{119.44} \.{\land} mp . write\@s{3.70} \.{\neq} BottomWrite}%
\@x{\@s{119.44} \.{\land} mp . write . dataItem \.{=} d \} \}}%
 \@x{\@s{8.2} lr \.{\defeq} MaxW ( \{ MaxCommittedWriteFor ( d ) \} \.{\cup}
 returnedReads )}%
\@x{ \.{\IN}}%
\@x{\@s{8.2} \.{\land} \.{\lor} \.{\land} shouldSendHarmoniaRead}%
 \@x{\@s{30.42} \.{\land} Send ( [ mtype\@s{50.36} \.{\mapsto} MHarmoniaRead
 ,\,}%
\@x{\@s{70.18} dataItem\@s{37.29} \.{\mapsto} d ,\,}%
\@x{\@s{70.18} switchNum\@s{28.09} \.{\mapsto} s ,\,}%
 \@x{\@s{70.18} lastCommitted\@s{13.02} \.{\mapsto} switchStates [ s ] .
 lastCommitted ,\,}%
\@x{\@s{70.18} ghostLastReponse \.{\mapsto} lr ] )}%
\@x{\@s{19.31} \.{\lor} \.{\land} {\lnot} shouldSendHarmoniaRead}%
 \@x{\@s{30.42} \.{\land} Send ( [ mtype\@s{50.36} \.{\mapsto} MProtocolRead
 ,\,}%
\@x{\@s{70.18} dataItem\@s{37.29} \.{\mapsto} d ,\,}%
\@x{\@s{70.18} ghostLastReponse \.{\mapsto} lr ] )}%
 \@x{\@s{8.2} \.{\land} {\UNCHANGED} {\langle} switchStates ,\, activeSwitch
 ,\, replicaCommitPoints ,\, sharedLog {\rangle}}%
\@pvspace{8.0pt}%
\@x{}%
\@y{\@s{0}%
 Process protocol read \ensuremath{m
}}%
\@xx{}%
\@x{ HandleProtocolRead ( m ) \.{\defeq}}%
 \@x{\@s{8.2} \.{\land} Send ( [ mtype\@s{50.36} \.{\mapsto} MReadResponse
 ,\,}%
 \@x{\@s{47.96} write\@s{54.07} \.{\mapsto} MaxCommittedWriteFor ( m .
 dataItem ) ,\,}%
\@x{\@s{47.96} ghostLastReponse \.{\mapsto} m . ghostLastReponse ] )}%
 \@x{\@s{8.2} \.{\land} {\UNCHANGED} {\langle} switchStates ,\, activeSwitch
 ,\, replicaCommitPoints ,\, sharedLog {\rangle}}%
\@pvspace{8.0pt}%
\@x{}%
\@y{\@s{0}%
 Replica \ensuremath{r} receives \ensuremath{Harmonia} read \ensuremath{r
}}%
\@xx{}%
\@x{ HandleHarmoniaRead ( r ,\, m ) \.{\defeq}}%
\@x{ \.{\LET}}%
\@x{\@s{8.2} cp \.{\defeq} replicaCommitPoints [ r ]}%
 \@x{\@s{8.2} w \.{\defeq} MaxCommittedWriteForIn ( m . dataItem ,\, SubSeq (
 sharedLog ,\, 1 ,\, cp ) )}%
\@x{ \.{\IN}}%
\@x{\@s{8.2}}%
\@y{\@s{0}%
 Can only accept \ensuremath{Harmonia} reads from the active switch
}%
\@xx{}%
\@x{\@s{8.2} \.{\land} m . switchNum \.{=} activeSwitch}%
\@x{\@s{8.2} \.{\land} \.{\lor} \.{\land} isReadBehind}%
\@x{\@s{30.42}}%
\@y{\@s{0}%
 Replica can only process read if it is up-to-date
}%
\@xx{}%
 \@x{\@s{30.42} \.{\land} GTE ( {\IF} cp \.{>} 0 \.{\THEN} sharedLog [ cp ]
 \.{\ELSE} BottomWrite ,\, m . lastCommitted )}%
\@x{\@s{19.31} \.{\lor} \.{\land} isReadAhead}%
\@x{\@s{30.42}}%
\@y{\@s{0}%
 Replica can only process read if write was completed
}%
\@xx{}%
\@x{\@s{30.42} \.{\land} GTE ( m . lastCommitted ,\, w )}%
 \@x{\@s{8.2} \.{\land} Send ( [ mtype\@s{50.36} \.{\mapsto} MReadResponse
 ,\,}%
\@x{\@s{47.96} write\@s{54.07} \.{\mapsto} w ,\,}%
\@x{\@s{47.96} ghostLastReponse \.{\mapsto} m . ghostLastReponse ] )}%
 \@x{\@s{8.2} \.{\land} {\UNCHANGED} {\langle} switchStates ,\, activeSwitch
 ,\, replicaCommitPoints ,\, sharedLog {\rangle}}%
\@pvspace{8.0pt}%
\@x{ SwitchFailover \.{\defeq}}%
\@x{\@s{8.2} \.{\land} activeSwitch \.{<} numSwitches}%
\@x{\@s{8.2} \.{\land} activeSwitch \.{'} \.{=} activeSwitch \.{+} 1}%
 \@x{\@s{8.2} \.{\land} {\UNCHANGED} {\langle} messages ,\, switchStates ,\,
 replicaCommitPoints ,\, sharedLog {\rangle}}%
\@pvspace{8.0pt}%
\@x{}\midbar\@xx{}%
\@x{}%
\@y{\@s{0}%
   \textbf{\large Main Transition Function}
}%
\@xx{}%
\@pvspace{8.0pt}%
 \@x{ Next \.{\defeq} \.{\lor} \E\, s\@s{4.03} \.{\in} ( 1 \.{\dotdot}
 numSwitches ) \.{:}}%
 \@x{\@s{50.94} \E\, d\@s{2.80} \.{\in} dataItems \.{:} \.{\lor} SendWrite ( s
 ,\, d )}%
\@x{\@s{132.29} \.{\lor} SendRead ( s ,\, d )}%
 \@x{\@s{39.83} \.{\lor} \E\, m \.{\in} Range ( sharedLog ) \.{:}
 ProcessWriteCompletion ( m )}%
 \@x{\@s{39.83} \.{\lor} \E\, m \.{\in} messages \.{:} \.{\lor} \.{\land} m .
 mtype \.{=} MWrite}%
\@x{\@s{140.06} \.{\land} HandleWrite ( m )}%
\@x{\@s{128.95} \.{\lor} \.{\land} m . mtype \.{=} MProtocolRead}%
\@x{\@s{140.06} \.{\land} HandleProtocolRead ( m )}%
\@x{\@s{128.95} \.{\lor} \E\, r \.{\in} replicas \.{:}}%
\@x{\@s{140.06} \.{\land} m . mtype \.{=} MHarmoniaRead}%
\@x{\@s{140.06} \.{\land} HandleHarmoniaRead ( r ,\, m )}%
 \@x{\@s{39.83} \.{\lor} \E\, r\@s{3.65} \.{\in} replicas\@s{7.02} \.{:}
 CommitWrite ( r )}%
\@x{\@s{39.83} \.{\lor} SwitchFailover}%
\@pvspace{8.0pt}%
\@x{}\bottombar\@xx{}%

\end{appendix}

\end{document}